\numberwithin{equation}{section}
 \newcommand{\be}{\begin{equation}}
 \newcommand{\ee}{\end{equation}}
 \newcommand{\bb}[1]{\mathbb{#1}}
 \newcommand{\taubar}{{\bar \tau}}
 \newcommand{\pbar}{{\bar p}}
 \newcommand{\wtilde}{{\tilde w}}
 \newcommand{\Dtilde}{{\tilde D}}
 \newcommand{\Stilde}{{\tilde S}}
 \newcommand{\rtilde}{{\tilde r}}
 \newcommand{\ptilde}{{\tilde p}}
 \newcommand{\utilde}{{\tilde u}}
 \newcommand{\rv}{{\bf r}}
 \newcommand{\pv}{{\bf p}}
 \newcommand{\attesa}[1]{{\langle #1 \rangle_{\nu_{\tau,\pbar,\beta}}}}
 \newcommand{\somma}[3]{{\sum_{{#1}={#2}}^{#3}}}
 \newcommand{\PI}[1]{{p \left(t,\dfrac{#1}{N}\right)}}
 \newcommand{\RHO}[1]{{r \left(t,\dfrac{#1}{N}\right)}}
 \newcommand{\TAU}[1]{{\hat \tau \left(t,\frac{#1}{N}\right)}}
 \newcommand{\e}[1]{e^{\scalebox{0.7}{$\displaystyle #1$}}}
 \def\deltauno{\delta_1}
\def\deltadue{\delta_2}
\theoremstyle{plain}
\newtheorem{thm}{Theorem}[section]
\newtheorem{prop}[thm]{Proposition}
\newtheorem{cor}[thm]{Corollary}
\newtheorem{lem}[thm]{Lemma}
\theoremstyle{remark}
\newtheorem*{oss}{{\bf Remark}}
\begin{document}

\title{Hydrodynamic limit for a diffusive system \\ with boundary conditions}
\author{Stefano Marchesani}

\date{}
\maketitle

\abstract
 {
We study the hydrodynamic limit for the isothermal dynamics of an
anharmonic chain under hyperbolic space-time scaling and with nonvanishing viscosity. 
The temperature is kept constant by a contact with a heat bath, realised via a
stochastic momentum-preserving noise added to the dynamics. The
noise is designed so it  contributes to the macroscopic limit. Dirichlet boundary conditions are also considered:
one end of the  chain is kept fixed, while a time-varying tension
is applied to the other end. Moreover, Neumann boundary conditions are added in such a way that the system produces the correct thermodynamic entropy in the macroscopic limit.
We show that the volume stretch and
momentum converge (in an appropriate sense) to a smooth solution of a
system of parabolic conservation laws (isothermal Navier-Stokes equations in
Lagrangian coordinates) with boundary conditions. 

Finally, changing the external tension allows us to define
thermodynamic isothermal transformations between equilibrium states. We use this
to deduce the first and the second law of Thermodynamics for our
model.}
\bigbreak

Keywords: hydrodynamic limits, relative entropy,
thermodynamics, Clausius inequality.
\\
Mathematics Subject Classification numbers:  60K35, 82C05, 82C22, 35Q79

{\let\thefootnote\relax
\footnote{{\today}}
}

\section{Introduction}
We consider a Hamiltonian system of anharmonic elastic springs. As it is well-known, Hamiltonian systems have poor ergodic properties. This means that, in general, there are conserved quantities other than total mass (or, in the case of a chain, length), momentum and energy, that ''survive'' as the number of particles goes to infinity. While it is expected that a suitable choice of the interaction among the particles might be able to provide ergodicity, this is still an important open problem. The typical solution is then to add a stochastic perturbation to the Hamiltonian dynamics that ''destroys'' all the extra conserved quantities and makes the dynamics ergodic.

Therefore, we put our system in contact with a heat bath which fixes the temperature along the chain. The heat bath is realised via a stochastic perturbation which acts both on the momenta (physical noise) and  the positions (artificial noise). Although fixing the temperature violates the conservation of energy, the noise is designed in such a way that it conserves the total length and momentum of the chain, at least away from the boundaries. Dirichlet boundary conditions are considered: one end of the chain is kept fixed, while the other end is pulled by a time-dependent external force. Finally, extra boundary conditions of the Neumann type are added: this is required so that the system produces the correct Clausius inequality in the macroscopic limit. 

This same system has been considered in \cite{MarchOlla1,MarchOllaHyperClausius} (without the extra Neumann conditions) and \cite{Fritz1} (for an infinite chain). A main feature is that the noise on the positions is nonlinear and has the same nonlinearity as the Hamiltonian interaction (see equation \eqref{eq:SDE} below). This kind of noise grants an easy bound on the Dirichlet form on the positions, which is heavily used in \cite{MarchOlla1, Fritz1}, as it leads to a crucial two-block estimate.

In the present paper we do not make use of a two-block estimate, as only the one-block estimate is enough to prove the hydrodynamic limit. In fact, the noise on the momenta alone is expected to provide such an estimate, as in \cite{even2010hydrodynamic}.

Nevertheless, a nonlinear noise translates into a nonlinear viscosity in the macroscopic limit. This gives rise to an interesting parabolic p-system with nonlinear viscosity and boundary conditions of mixed Dirichet-Neumann type. 

We obtain the hydrodynamic limit as a result of the \emph{relative entropy method}. This requires the existence of regular enough solutions to the macroscopic equations. However, although the macroscopic system is parabolic, it is also \emph{nonlinear}. Thus, we may not assume existence of global classical solutions, and such an existence needs to be proven. This is done in \cite{AlasioMarchesani}, where more general systems and boundary conditions are considered.

The relative entropy method for the full $3 \times 3$ Euler system has been employed in both \cite{olla1993hydrodynamical, even2010hydrodynamic}. 

In \cite{olla1993hydrodynamical} a tridimensional dynamics is studied in a bounded domain with periodic boundary conditions. Eulerian coordinates are used and because of that the classical (quadratic) kinetic energy yields a cubic term in the energy current. Such a term fails to be controlled by the relative entropy, and thus the kinetic energy is modified so it grows linearly at infinity (as an example one may think of the relativistic kinetic energy). In this way the energy current grows also linearly and thus can be controlled by the relative entropy.

In order to avoid modifying the kinetic energy, in \cite{even2010hydrodynamic} they work in Lagrangian coordinates, as no cubic current appears, in this setting. The paper also employs the same Dirichlet boundary conditions we are imposing, although a different noise is considered and no Neumann condition are added, as the macroscopic system in \cite{even2010hydrodynamic} is hyperbolic.

This article is structured as follows: Section 2 describes the dynamics and defines its invariant measures. In Section 3 we describe the macroscopic equations both in the variables $(r,p)$ and their conjugate $(\tau, p)$. Section 4 is devoted to the statement and proof of the hydrodynamic limit. Finally, in Section 5 we derive the first and the second law of Thermodynamics.

\section{Dynamics and Gibbs measures}
We study a system of $N+1$ particles coupled via anharmonic springs. The position of the $i$-th particle is $q_i \in \bb{R}$, and its momentum is $p_i \in \bb{R}$. The $0$-th particle is kept fixed at the origin, i.e. $(q_0,p_0) \equiv (0,0)$, while to the $N$-th particle is applied a time-dependent tension $\taubar(t)$. Particles $i$ and $i-1$ interact via a potential $V(q_i-q_{i-1})$ depending only on the relative position of nearest neighbours. The function $V:  \bb{R} \to \bb{R}_+$ is assumed to be smooth and strongly convex, meaning there exist strictly  constants $C_-$ and $C_+$ such that
\begin{equation}
0<C_- \le V''(r) \le C_+, \qquad \forall r \in \bb{R}.
\end{equation}
For ${\bf q} := (q_0, \dots, q_N)$ and $\pv := (p_0, \dots p_N)$, the energy for the system is defined through the Hamiltonian
\begin{equation}
\mathcal H_N ({\bf q}, \pv,t) :=\frac{p_0^2}{2}+ \somma{i}{1}{N} \left( \frac{p_i^2}{2}+V(q_i-q_{i-1}) \right) - \taubar(t) q_N.
\end{equation}
Since the interaction depends on the distances of nearest neighbours, we define the interparticle distances
\begin{equation}
r_i := q_i-q_{i-1}, \qquad 1 \le i \le N.
\end{equation}
Consequently, since we also have $p_0 = 0$, the phase space is given by $(\bb{R}^N)^2$ and the Hamiltonian takes the form $\mathcal H_N(\rv, \pv,t) = \somma{i}{1}{N} e_i$, where
\begin{equation}
e_i := \frac{p_i^2}{2}+V(r_i)-\taubar(t) r_i, \quad 1\le i \le N
\end{equation}
is the one-particle energy.  The system is then put in contact with a heat bath that acts as a microscopic stochastic viscosity.   

If we perform a hyperbolic space-time scaling, choose a non-vanishing viscosity and define the discrete gradient and Laplacian by
$$
\nabla a_i := a_{i+1}-a_i, \qquad \Delta a_i := a_{i+1}+a_{i-1}-2a_i
$$ 
the evolution equations are given by the following system of stochastic differential equations:
\begin{equation} \label{eq:SDE}
\begin{cases}
d r_1 =  Np_1 d t + \deltauno N^2 \nabla V'(r_1) d t - \sqrt{2 \beta^{-1}\deltauno} N \, d \widetilde w_1
\\
d r_i =  N\nabla p_{i-1} d t + \deltauno N^2 \Delta V'(r_i) d t - \sqrt{2 \beta^{-1} \deltauno} N\,
\nabla d\widetilde w_{i-1}, & 2 \le i \le N-1\\
d r_N =  N\nabla p_{N-1} dt + \deltauno N^2 \left(\taubar(t) + V'(r_{N-1}) - 2V'(r_N)\right)
- \sqrt{2 \beta^{-1} \deltauno} N\, \nabla d\widetilde w_{N-1},
\\
d p_1 = N \nabla V'(r_1) d t + \deltadue  N^2\left(p_2 - 2p_1\right) dt - \sqrt{2 \beta^{-1} \deltadue}N
\,  \nabla dw_0,\\
d p_j =  N\nabla V'(r_j) d t +\deltadue N^2\Delta p_j  d t - \sqrt{2 \beta^{-1} \deltadue}N
\,  \nabla dw_{j-1}, & 2 \le j \le N-1
\\
d p_N =N (\taubar(t)-V'(r_N)) d t -\deltadue  N^2\nabla p_{N-1}d t
+ \sqrt{2\beta^{-1}\deltadue }N\, d  w_{N-1}
\end{cases}
\end{equation}
Here $t \ge 0$ is the macroscopic time, $\beta^{-1} >0$ is the temperature and $\{\tilde w_i\}_{i=1}^{N}$, $\{ w_i\}_{i=0}^{N-1}$ are independent
families of independent Brownian motions. Note that we added Neumann boundary conditions for $r_1$ and $p_N$, as no Laplacian in the viscosity term appears there.

The boundary tension $\taubar : \mathbb R_+ \to \mathbb R$ is smooth and such that there exist a time $T_\star$ after which $\taubar$ is constant. Note that the boundary tension changes at a \emph{macroscopic} time scale.

The dynamics is generated by
\begin{equation}
\mathcal G^{\taubar(t)}_N:= N L_N^{\taubar(t)}+ N^2 \left(\delta_1 \Stilde_N^{\taubar(t)} + \delta_2 S_N \right).
\end{equation}
The Liouville operator $L_N^{\bar \tau(t)}$ is given by
\begin{equation}
L_N^{\bar \tau(t)}  = p_1 \frac{\partial}{\partial r_1}+ \sum_{i = 2}^N (p_i-p_{i-1})\frac{\partial}{\partial r_i}+
\sum_{i =1}^{N-1} \left(V'(r_{i+1}) -V'(r_i) \right)  \frac{\partial}{\partial p_i} +(\bar \tau(t) - V'(r_N)) \frac{\partial}{\partial p_N}.
\end{equation}

The operators $S_N$ and $\Stilde_N$ generate the stochastic part of the dynamics and are defined by
\begin{equation}
S_N := -\beta^{-1}\sum_{i=0}^{N-1}  D^*_i D_i, \quad \tilde S_N^{\taubar(t)} := - \beta^{-1}\sum_{i=1}^N  \Dtilde^*_i \Dtilde_i,
\end{equation}
where, for $1 \le i \le N-1$,
\begin{align}
\qquad D_i := \frac{\partial }{\partial p_{i+1}}-\frac{\partial }{\partial p_i},& \qquad D^*_i := \beta(p_{i+1}-p_i) -D_i
\\
\Dtilde_i  := \frac{\partial }{\partial r_{i+1}}-\frac{\partial }{\partial r_i},& \qquad \Dtilde^*_i :=\beta \left( V'(r_{i+1})- V'(r_i)\right) -  \tilde D_i.
\end{align}
The extra boundary operators are
\begin{align}
D_0 := \frac{\partial}{\partial p_1},& \qquad D_0^* := \beta p_1 - D_0,
\\
\Dtilde_N := -\frac{\partial}{\partial r_N},& \qquad \Dtilde_N^* := \beta(\taubar(t)-V'(r_N))-\Dtilde_N.
\end{align}
For $\bar p, \tau \in \mathbb R$ and $\beta >0$ we define the following family of Gibbs measures as
\begin{equation}
\nu_{\tau, \pbar, \beta}^N(d \rv, d\pv) := \prod_{i=1}^N \e{ \beta\tau r_i + \beta \pbar p_i- \beta\left(\frac{p_i^2}{2}+V(r_i)\right)- G(\tau,\bar p,\beta)}dr_idp_i,
\end{equation}
where $G$ is the Gibbs potential
\begin{align}
G(\tau,\pbar,\beta) : & = \log \int_{-\infty}^{+\infty}dr \int_{-\infty}^{+\infty}dp\, \e{ \beta\tau r + \beta \pbar p- \beta\left(\frac{p^2}{2}+V(r)\right)} \nonumber
\\
&= \sqrt\frac{2\pi}{\beta}+\beta \frac{\pbar^2}{2} +\log \int_{-\infty}^{+\infty} dr\, \e{\beta \tau r- \beta V(r)}.
\end{align}
We observe that, for constant $\taubar$, the Gibbs-measure $\nu^N_{\bar \taubar,0,\beta}$ is invariant for the dynamics generated by $\mathcal G_N^{\bar \taubar}$.

Setting, $\nu_{\tau,\pbar,\beta}:=\nu^1_{\tau,\pbar,\beta}$ we define the average elongation by 
\begin{equation}
\ell(\tau,\beta):= \attesa{r_1} = \frac{1}{\beta} \frac{\partial G}{\partial \tau}.
\end{equation}
It is standard to check (cf Appendix A of \cite{MarchOlla1}) that the function $\ell(\cdot, \beta)$ is strictly increasing and hence invertible. Its inverse shall be denoted by $\tau(\cdot, \beta)$.

Finally, we note that we have
\begin{equation}
\attesa{p_1} = \frac{1}{\beta}\frac{\partial G}{\partial \pbar} = \pbar \qquad \attesa{V'(r_1)} = \tau, \qquad   \attesa{p_1^2}  -\bar p^2= \beta^{-1},
\end{equation}
which identify $\bar p$ with the mean velocity, $\tau$ with the mean force (tension) and $\beta^{-1}$ with the temperature.
\section{The macroscopic equations}
Since the dynamics described in the previous section fixes the temperature to the constant value $\beta^{-1}$, we shall omit to write the dependencies on $\beta$. For example, we will simply write $\tau(r)$ instead of $\tau(r,\beta)$.

We expect that the empirical measures
\begin{equation}
r_N(t,dx) := \frac{1}{N}\somma{i}{1}{N}\delta\left(x-\frac{i}{N}\right) r_i(t)dx, \qquad p_N(t,dx) := \frac{1}{N}\somma{i}{1}{N}\delta\left(x-\frac{i}{N}\right) p_i(t)dx
\end{equation}
converge, in a suitable sense, to  absolutely continuous measures
\begin{equation}
r(t,x)dx, \quad p(t,x)dx
\end{equation}
whose densities $r$ and $p$ solve the parabolic system
\begin{equation} \label{system_pr}
\begin{cases} 
\partial_t r- \partial_x p= \delta_1 \partial_{xx}\tau(r)
\\
\partial_tp- \partial_x \tau(r)=\delta_2 \partial_{xx}p
\end{cases}
\end{equation}
with boundary conditions
\begin{equation}\label{eq:bc_pr}
p(t,0)=0, \quad r(t,1)=  \ell \left(\taubar(t) \right), \quad \partial_x p(t,1)=0, \quad \partial_x r(t,0)=0, \qquad \forall t \ge 0
\end{equation}
and initial data
\begin{equation}\label{eq:ic_pr}
p(0,x) = 0, \qquad r(0,x) = \ell(\taubar(0)), \qquad \forall x \in[0,1].
\end{equation}

Given our assumptions on $V$, the tension $\tau : \bb{R} \to \bb{R}$ is smooth, increasing and strictly convex. Furthermore, $\tau'$ is bounded away from zero (cf Appendix A of \cite{MarchOlla1}). 
In order to use the relative entropy method, we shall need equations for the Legendre conjugates (with respect to the Gibbs potential) of the variables $r(t,x)$ and $p(t,x)$. The conjugate of $r(t,x)$ is $\hat \tau(t,x):=\tau(r(t,x))$, while the conjugate of $p(t,x)$ is itself.

The function $\hat \tau(t,x)$ solves the equation
\begin{equation}
\partial_t \hat \tau(t,x)= \tau'(r(t,x))\partial_xp(t,x)+ \delta_1 \tau'(r(t,x))\partial_{xx}\hat \tau(t,x).
\end{equation}
with boundary conditions $\hat \tau(t,1)= \taubar(t)$ and $\partial_x\hat \tau(t,0)=0$. The latter follows from
\begin{equation}
\partial_x r(t,x) = \partial_x \ell(\hat \tau(t,x)) = \ell ' (\hat \tau(t,x)) \partial_x \hat \tau(t,x),
\end{equation}
$\partial_x r(t,0)=0$ and the fact that $ \ell'$ never vanishes ($\ell$ is strictly increasing).

Thus, provided identifying $r(t,x) = \ell(\hat \tau(t,x))$, the couple $(\hat \tau(t,x),p(t,x))$ solves the system
\begin{equation}\label{system_ptau}
\begin{cases}
\partial_t\hat \tau -\tau'(r)\partial_xp= \tau'(r)\delta_1 \partial_{xx}\hat \tau
\\
\partial_tp - \partial_x\hat \tau= \delta_2 \partial_{xx}p,
\end{cases},
\end{equation}
with boundary conditions
\begin{equation}
p(t,0)=0, \quad\hat \tau(t,1)= \taubar(t), \quad \partial_x p(t,1)=0, \quad \partial_x\hat \tau(t,0)=0
\end{equation}
and initial data
\begin{equation}
p(0,x) = 0, \qquad\hat \tau(0,x)= \taubar(0).
\end{equation}
Existence and uniqueness of global solutions of class $C^1$ in time and $C^2$ in space for systems \eqref{system_pr} and \eqref{system_ptau} with our initial-boundary conditions have been proven in \cite{AlasioMarchesani}. More precisely, we have
\begin{align}
r,p \in  C^1(\mathbb R_+ ; C^0([0,1]) )\cap C^0(\mathbb R_+; C^2([0,1])).
\end{align}
Therefore, in the following we shall assume $r(t,x)$ (or equivalently $\hat \tau(t,x)$) and $p(t,x)$ to be such regular solutions.

\section{Main theorem and relative entropy}
We define the local Gibbs measures as
\begin{equation}
g^N_t(\rv, \pv)d\rv d \pv:= \prod_{i=1}^N \e{\beta \TAU{i} r_i+ \beta \PI{i}p_i - \beta \left(\frac{p_i^2}{2}+V(r_i)\right) -G\left( \TAU{i},\PI{i},\beta\right)}d r_i dp_i
\end{equation}
and let $f^N_t(\rv, \pv)$ be the solution of the Fokker-Plank equation
\begin{equation}
\dfrac{\partial f_t^N}{\partial t} = \mathcal G^{\taubar(t), \dagger}_N f_t^N, \qquad f_0^N(\rv,\pv) = g_0^N(\rv,\pv),
\end{equation}
where $\dagger$ denotes the adjoint with respect to the Lebesgue measure on $\mathbb R^{2N}$.
If define the relative entropy as
\begin{equation}
H_N(t) := \int f_t^N \log \frac{f_t^N}{g_t^N}\, d\rv d\pv
\end{equation}
our aim is to prove the following
\begin{thm}[Main theorem] \thlabel{thm:main}
Denote by $d\mu_t^N = f_t^N(\rv, \pv) d\rv, d\pv$ the probability distribution of the system at time $t \ge 0$, starting from the local Gibbs measure $d \nu^N_0 :=g_0^N(\rv, \pv) d\rv d\pv$ corresponding the the initial profiles $r(0,x)$ and $p(0,x)$. Let ${\bf u}_i := (r_i, p_i)$ and ${\bf u}(t,x) :=(r(t,x),p(t,x))$. Then, for any continuous function $J : [0,1] \to \bb{R}$ and any $\varepsilon >0$,
\begin{equation}
\lim_{N\to \infty} \mu_t^N \left( \left| \frac{1}{N} \somma{i}{1}{N} J \left(\frac{i}{N}\right){\bf u}_i- \int_0^1 J(x){\bf u}(t,x)dx \right| > \varepsilon \right) = 0,
\end{equation}
where ${\bf u} \in C^1(\mathbb R_+ ; C^0([0,1]) )\cap C^0(\mathbb R_+; C^2([0,1]))$ is a solution of the system \eqref{system_pr} with boundary conditions \eqref{eq:bc_pr} and initial conditions \eqref{eq:ic_pr}.
\end{thm}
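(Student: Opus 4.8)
\section*{Proof proposal}

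The plan is to run Yau's relative entropy method with $H_N(t)$ as Lyapunov functional. The first step is the entropy production inequality: a standard computation (differentiating $H_N$, using the Fokker--Planck equation and convexity of $\xi\log\xi$) gives
\[
\frac{d}{dt}H_N(t) \le \int \left( \frac{1}{g_t^N}\bigl( \mathcal G_N^{\taubar(t),\dagger} - \partial_t \bigr) g_t^N \right) f_t^N \, d\rv\, d\pv \;+\; (\text{boundary terms}),
\]
where the nonpositive Dirichlet-form contribution has been discarded (a portion of it may have to be retained to absorb gradient-replacement errors, but the non-vanishing macroscopic viscosity should make this unnecessary). Writing $V_t^N := (g_t^N)^{-1}(\mathcal G_N^{\taubar(t),\dagger}-\partial_t)g_t^N$ and $\log g_t^N = \sum_i\bigl[\beta\,\TAU{i}\,r_i + \beta\,\PI{i}\,p_i - \beta e_i^0 - G(\TAU{i},\PI{i})\bigr]$ with $e_i^0 = p_i^2/2+V(r_i)$, one computes $V_t^N$ explicitly. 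After a discrete summation by parts, the hyperbolic part $N L_N^{\taubar(t)}$ contributes terms of the form $N^{-1}\sum_i \partial_x(\cdots)(i/N)\,(\text{microscopic quantity})$, while the viscous part $N^2(\delta_1\Stilde_N^{\taubar(t)}+\delta_2 S_N)$, acting on the $\beta\TAU{i}r_i-\beta V(r_i)$ and $\beta\PI{i}p_i-\beta p_i^2/2$ pieces, produces terms built from $V'(r_i)$, $V''(r_i)$ and discrete Laplacians. The crucial cancellation is that, expanding the slowly varying fields $\hat\tau(t,\cdot)$, $p(t,\cdot)$ to second order (legitimate by the $C^1_tC^0_x\cap C^0_tC^2_x$ regularity from \cite{AlasioMarchesani}), the leading drift in $V_t^N$ vanishes \emph{exactly} because $(\hat\tau,p)$ solves \eqref{system_ptau} with its viscosities $\tau'(r)\delta_1\partial_{xx}\hat\tau$ and $\delta_2\partial_{xx}p$; what survives is (a) a remainder of order $o(N)$ from Taylor truncation, and (b) fluctuation terms of the form $N^{-1}\sum_i F(t,i/N)\bigl(\tau_i^{\mathrm{loc}}-\tau(r(t,i/N))\bigr)$, with $\tau_i^{\mathrm{loc}}$ a local microscopic average of $V'$ (plus the easier, linear-in-$p$ analogues).

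The fluctuation terms (b) are absorbed by the one-block estimate: replace $V'(r_i)$ by its empirical average over a box of $k$ sites; then, via the entropy inequality together with the equivalence of ensembles and the smoothness/convexity of $\tau$ and $\ell$, replace this by $\tau$ evaluated at the empirical average of $r$ over the box; and finally by $\tau(r(t,i/N))$ using continuity of the macroscopic profile. Each replacement costs, after time integration, at most $C N^{-1}H_N(t)+o(N)$, so that
\[
\frac{d}{dt}H_N(t)\le C\,H_N(t) + R_N(t),\qquad \int_0^T |R_N(t)|\,dt = o(N).
\]
Since $f_0^N = g_0^N$, we have $H_N(0)=0$, and Gronwall's inequality yields $H_N(t)=o(N)$ uniformly on compact time intervals. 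As announced, no two-block estimate enters: the momentum noise $S_N$ alone furnishes the needed local ergodicity, in the spirit of \cite{even2010hydrodynamic}.

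The boundary contributions require separate treatment and are, I expect, the main obstacle. The discrete summations by parts generate boundary terms at $i=1$ and $i=N$; those from the Liouville part are controlled by the Dirichlet data $p(t,0)=0$, $\hat\tau(t,1)=\taubar(t)$ (matching the microscopic fixed end $p_0\equiv0$ and the applied tension $\taubar(t)$), whereas the viscous boundary terms are cancelled precisely by the Neumann conditions $\partial_x p(t,1)=0$, $\partial_x\hat\tau(t,0)=0$ --- which is exactly why the microscopic dynamics was endowed with the extra Neumann rules for $r_1$ and $p_N$ and the boundary operators $D_0,\Dtilde_N$. One must verify that, after matching $D_0$, $\Dtilde_N$ and the boundary drift against the macroscopic boundary values, the residual boundary terms are $O(1)=o(N)$; the nonlinearity of the position noise makes this bookkeeping the most delicate point.

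Finally, from $H_N(t)=o(N)$ the law of large numbers follows by the entropy inequality. For the event $A_\varepsilon := \bigl\{\,\bigl|N^{-1}\sum_{i=1}^N J(i/N)\,\mathbf u_i - \int_0^1 J(x)\,\mathbf u(t,x)\,dx\bigr| > \varepsilon \,\bigr\}$ one has
\[
\mu_t^N(A_\varepsilon)\le \frac{\log 2 + H_N(t)}{\log\bigl(1+ g_t^N(A_\varepsilon)^{-1}\bigr)} .
\]
Since $g_t^N$ is a product measure whose slowly varying parameters have means $\ell(\hat\tau(t,i/N)) = r(t,i/N)$ and $p(t,i/N)$, so that $N^{-1}\sum_i J(i/N)\,E_{g_t^N}[\mathbf u_i]\to\int_0^1 J\mathbf u\,dx$, a standard exponential Chebyshev bound --- using the sub-Gaussian tails of $r_i$ (strong convexity of $V$) and the Gaussianity of $p_i$ --- gives $g_t^N(A_\varepsilon)\le e^{-c(\varepsilon,J)N}$. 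Hence $\mu_t^N(A_\varepsilon)\le (\log 2 + o(N))/(cN)\to 0$, which is the assertion of \thref{thm:main}.
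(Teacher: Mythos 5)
Your plan follows the paper's approach closely: Yau's relative entropy method (entropy production inequality, explicit computation of $V_t^N = (g_t^N)^{-1}(\mathcal G_N^{\taubar,\dagger}-\partial_t)g_t^N$, Taylor expansion against the macroscopic PDE, one-block estimate, Gronwall), and then the large-deviation/entropy inequality argument to pass from $H_N(t)=o(N)$ to the law of large numbers. The paper indeed avoids a two-block estimate and cuts a strip near the boundary before applying the one-block lemma, as you anticipate.

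The genuine gap is in the boundary bookkeeping, and it is not mere bookkeeping: the artificial-viscosity operator $\Dtilde_N^\flat\Dtilde_N$ leaves a residual term $\beta N\,\partial_x\hat\tau(1,t)\bigl(\taubar(t)-V'(r_N)\bigr)$ that is $O(N)$ \emph{pointwise}, not $O(1)$, and is not killed by the Neumann condition $\partial_x\hat\tau(t,0)=0$ (that condition only handles the $i=1$ side). The paper needs a dedicated lemma to show that
\begin{equation*}
\int_0^t\!\!\int \bigl|\taubar(s)-V'(r_N)\bigr|\,f_s^N\,d\rv\,d\pv\,ds
\;\le\; \frac{C}{N}\Bigl(1+t+\int_0^t H_N(s)\,ds\Bigr)+\frac{1}{2}\frac{H_N(t)}{N},
\end{equation*}
which is obtained not by an entropy bound alone but by the algebraic identity $\taubar(t)-V'(r_N)=\Stilde_N^{\taubar(t)}\sum_i r_i = \frac{1}{\delta_1 N^2}\mathcal G_N^{\taubar(t)} q_N - \frac{1}{\delta_1 N}p_N$, so that time integration turns the offending term into a Dynkin-type boundary contribution plus $\frac{1}{\delta_1 N}\int_0^t\int p_N f_s^N$, each controllable via the entropy inequality. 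Your sketch flags the boundary terms as ``the most delicate point'' but predicts they are $O(1)$ and does not identify this step; without it the argument does not close. The rest of your outline (cancellation of the Liouville and $D_0$ boundary contributions against $p(t,0)=0$ and $\partial_xp(t,1)=0$, absorbing the gradient noise into the macroscopic viscosity, the one-block replacement $\bar V'_{k,i}\to\tau(\bar r_{k,i})\to\tau(r(t,i/N))$, and the exponential-moment bound for $g_t^N(A_\varepsilon)$) is in line with the paper.
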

We shall prove the main theorem as a consequence of the following
\begin{thm} \thlabel{thm:entropy}
\begin{equation}
\lim_{N \to \infty} \frac{H_N(t)}{N} = 0
\end{equation}
for all $t \ge 0$.
\end{thm}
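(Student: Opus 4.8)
The plan is to run Yau's \emph{relative entropy method}. Since $f_0^N=g_0^N$ we have $H_N(0)=0$, so it suffices to close a Gronwall inequality for $H_N$. By the now standard computation --- differentiate $H_N(t)$, insert the Fokker--Planck equation for $f_t^N$ and use the convexity bound $s\log s-s+1\ge 0$ --- one gets
\[
\frac{d}{dt}H_N(t)\ \le\ \int\Big(\frac{\mathcal G_N^{\taubar(t),\dagger}g_t^N}{g_t^N}-\partial_t\log g_t^N\Big)f_t^N\,d\rv\,d\pv\ -\ D_N(t),
\]
where $D_N(t)\ge 0$ collects the Dirichlet forms of $S_N$ and $\Stilde_N$ against $f_t^N$ (the $\delta_iN^2$ part of the generator). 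The nonnegative term $-D_N(t)$ is kept in reserve: it will absorb the entropy-production errors created by the one-block estimate.

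Next I would expand the integrand. Using the explicit product form of $g_t^N$, the antisymmetry of $L_N^{\taubar(t)}$ with respect to Lebesgue measure and a direct differentiation for $S_N,\Stilde_N$, then discrete summations by parts and a second-order Taylor expansion of the smooth profiles $\hat\tau(t,\cdot),p(t,\cdot)$ (legitimate by the $C^0$-in-time, $C^2$-in-space regularity from \cite{AlasioMarchesani}), the integrand splits into: \emph{(a)} an $O(N)$ ``mean part'' which is a Riemann sum of the PDE residuals $\partial_t\hat\tau-\tau'(r)\partial_xp-\delta_1\tau'(r)\partial_{xx}\hat\tau$ and $\partial_tp-\partial_x\hat\tau-\delta_2\partial_{xx}p$, each tested against a configuration-dependent weight, hence \emph{identically zero} because $(\hat\tau,p)$ solves \eqref{system_ptau} --- this exact cancellation is the whole reason the parameters of $g_t^N$ are taken to be the macroscopic solution; \emph{(b)} ``fluctuation'' terms $\sum_iw(i/N)\big(V'(r_i)-\TAU{i}\big)$ and $\sum_i\tilde w(i/N)\big(p_i-\PI{i}\big)$ with smooth weights $w,\tilde w$; \emph{(c)} Taylor remainders and genuinely lower-order pieces.

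The terms in \emph{(b)} are dealt with by the \emph{one-block estimate}: replace $V'(r_i)$ (resp.\ $p_i$) by its average over a block of $\ell$ neighbouring sites at a Dirichlet-form cost absorbed by $-D_N(t)$ --- this is where the noise, hence microscopic ergodicity, enters, as in \cite{even2010hydrodynamic} --- and then combine the entropy inequality $\int G\,f_t^N\le \frac1{aN}H_N(t)+\frac1{aN}\log\int e^{aNG}g_t^N$ with the equivalence of ensembles and the Gibbs identities $\attesa{V'(r_1)}=\tau$, $\attesa{p_1}=\pbar$ to bound the cost by $\frac CN H_N(t)+o(N)$; the leftover quadratic fluctuations carry a factor $O(N/\ell)$ and are $o(N)$ once $\ell\to\infty$. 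The remainders in \emph{(c)} are $O(1)=o(N)$ provided a few moments of $f_t^N$ (e.g.\ $\frac1N\sum_i\langle e_i\rangle_{f_t^N}$) are under control, and those bounds follow from the entropy estimate itself, so the scheme closes self-consistently. Finally, the boundary sites $i=1$ and $i=N$ contribute extra terms from the Dirichlet driving $\taubar(t)$ and from the Neumann pieces of \eqref{eq:SDE}; using $p(t,0)=0$, $\hat\tau(t,1)=\taubar(t)$, $\partial_xp(t,1)=0$ and $\partial_x\hat\tau(t,0)=0$ from \eqref{eq:bc_pr}, the leading $O(N)$ boundary contributions cancel --- the Neumann conditions were added to \eqref{eq:SDE} precisely so that the boundary terms of the discrete Laplacian disappear --- leaving only $O(1)$.

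Putting everything together gives $\frac{d}{dt}H_N(t)\le C(t)H_N(t)+o(N)$ uniformly on compact time intervals, with $C$ depending only on the regularity norm of $(r,p)$; since $H_N(0)=0$, Gronwall's inequality yields $H_N(t)/N\to0$ for every $t\ge0$. The main obstacle is the one-block estimate: one must prove the local-equilibrium (super-exponential) bound with control \emph{uniform} in the slowly varying parameters $\TAU{i},\PI{i}$ and, more delicately, valid in the boundary layers, where the product structure of $g_t^N$ is disturbed and the relevant noise operators are the modified ones $D_0$ and $\Dtilde_N$. A secondary difficulty is to extract the a priori moment (energy) bounds that kill the Taylor remainders and the nonlinear-viscosity cross terms (those involving $\tau'(r)$ and $V''$) directly from the entropy estimate, so that the Gronwall argument is self-contained.
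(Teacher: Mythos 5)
Your proposal follows the same overall relative entropy strategy as the paper: start from $H_N(0)=0$, bound the time evolution of $H_N(t)$ by the explicit operator $\left[\left(\mathcal G_N^{\taubar(s)}\right)^\dagger-\partial_s\right]$ acting on $g_s^N$, expand using the product form of $g_t^N$ and the smoothness of the profiles, invoke a one-block estimate, and close with Gronwall. In fact the paper's Lemma 4.4 is precisely your $\frac{d}{dt}H_N\le\int(\dots)f_t^N$ integrated in time (though the paper does not need to keep the $-D_N(t)$ Dirichlet form in reserve at this level; the one-block estimate it quotes from \cite{MarchOllaHyperClausius} already encapsulates that absorption).

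However, there is a genuine gap in your step \emph{(b)}. You describe the surviving fluctuation as $\sum_i w(i/N)\bigl(V'(r_i)-\TAU{i}\bigr)$, a quantity that vanishes only to \emph{first} order at local equilibrium. What actually survives the cancellation between the Liouville part, the viscosity parts and $\partial_t\log g_t^N$ is
\begin{equation*}
\sum_i \left[\partial_x \PI{i}+\delta_1\,\partial_{xx}\TAU{i}\right]\Bigl\{V'(r_i)-\TAU{i}-\tau'\bigl(\RHO{i}\bigr)\bigl[r_i-\RHO{i}\bigr]\Bigr\},
\end{equation*}
and the subtraction of the linear term $\tau'(\RHO{i})[r_i-\RHO{i}]$ is not cosmetic: it is what makes the bracket (the paper's $\Omega(t,x,\xi)$) vanish \emph{together with its first derivative} at $\xi=r(t,x)$. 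Without this quadratic vanishing, the exponential moment $\frac{1}{aN}\log\int e^{a\sum_i\Omega}\,g_t^N$ in the entropy inequality is only $O(1)$ rather than $o(1)$, and the Varadhan-type large-deviation step that closes the Gronwall inequality breaks down. In the relative entropy method, this second-order cancellation is exactly the mechanism by which the choice of parameters in $g_t^N$ (solving the macroscopic PDE) pays off, so it should be made explicit rather than folded into a generic ``PDE residual is zero'' statement.

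Two smaller inaccuracies: first, no one-block estimate is needed for $p_i$; the momentum enters linearly in both the generator and the Gibbs factor, so the $[p_j-\PI{j}]$ contributions from $L_N$, $S_N^\dagger$ and $\partial_t\log g_t^N$ cancel directly (up to a summation-by-parts error of order $H_N/N$), and only the nonlinear $V'$ requires block-averaging and replacement by $\tau$. Second, your treatment of the boundary is too optimistic: the artificial-viscosity boundary contribution $\beta N\,\partial_x\hat\tau(1,t)\bigl[\taubar(t)-V'(r_N)\bigr]$ does \emph{not} vanish by the Neumann conditions alone, because $\partial_x\hat\tau(1,t)\ne 0$ in general. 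The paper needs a separate lemma showing that $\int_0^t\int|\taubar(s)-V'(r_N)|\,f_s^N$ is $O(1/N)(1+\int_0^tH_N)+\tfrac12 H_N(t)/N$, obtained by writing $\taubar-V'(r_N)=\Stilde_N^{\taubar}\sum_ir_i$ and feeding the resulting generator term back through the entropy inequality. This step is essential and is absent from your outline.
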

\begin{proof}[Proof of  \thref{thm:main}]
For any two probability measures $\alpha$, $\beta$, such that $\alpha$ is absolutely continuous with respect to $\beta$, define the relative entropy
\begin{equation}
H(\alpha|\beta) := \int \log \frac{d \alpha}{d \beta} d\alpha,
\end{equation}
where $d\alpha / d \beta$ is the Radon-Nikodym derivative of $\alpha$ with respect to $\beta$. Then, for any measurable $h$ and any $\sigma >0$, the following entropy inequality holds:
 \begin{equation}
 \int h d \alpha \le \frac{1}{\sigma} \log \int e^{\sigma h} d \beta + \frac{1}{\sigma} H(\alpha|\beta).
 \end{equation}
In particular, if $h = 1_A$ is the indicator function of the set $A$, we obtain 
\begin{align}
\begin{split}
\alpha(A) & = \int h d \alpha \le \frac{1}{\sigma}\log \int e^{\sigma 1_A}d\beta + \frac{1}{\sigma}H(\alpha|\beta)
\\
& = \frac{1}{\sigma} \log\left( \beta(A) (e^\sigma-1)+1\right)+\frac{1}{\sigma}H(\alpha|\beta).
\end{split}
\end{align}
Choosing $\sigma = \log \left(1+\dfrac{1}{\beta(A)}\right)$ then gives 
\begin{equation}\label{eq:entropy_A}
\alpha(A) \le \frac{\log 2+H(\alpha|\beta)}{\log \left(1+\dfrac{1}{\beta(A)}\right)}.
\end{equation}
Thus, if we define
\begin{equation}
A_\varepsilon := \left\{ \left| \frac{1}{N} \somma{i}{1}{N} J \left(\frac{i}{N}\right){\bf u}_i- \int_0^1 J(x){\bf u}(t,x)dx \right| > \varepsilon \right\},
\end{equation}
for any continuous $J: [0,1] \to \bb{R}$, then thanks to Theorem \ref{thm:entropy} and \eqref{eq:entropy_A}, to prove $\mu_t^N(A_\varepsilon) \to 0$ as $N \to \infty$, it is enough to show that, for each $\varepsilon >0$,
\begin{equation}
\log \left(1+ \frac{1}{\nu_t^N(A_\varepsilon)}\right) \ge C(\varepsilon)N,
\end{equation}
for some constant $C(\varepsilon)$ independent of $N$. However, this is satisfied if
\begin{equation}
\nu_t^N(A_\varepsilon) \le e^{-C(\varepsilon)N},
\end{equation}
which is a standard result of the large deviation theory \cite{kipnis2013scaling, varadhan1988large}.
\end{proof}
The following Lemma follows from Lemma 1.4 of Chapter 6 of \cite{kipnis2013scaling} after a time integration and using the fact that $H_N(0)=0$.
\begin{lem}
\begin{equation} \label{eq:entropy}
H_N(t) \le  \int_0^t ds \int \frac{f_s^N}{g_s^N}\left[\left(\mathcal G_N^{\taubar(s)}\right)^\dagger-\partial_s\right] g_s^N\, d\rv d\pv.
\end{equation}
\end{lem}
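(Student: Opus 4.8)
The plan is to differentiate $H_N(t)$ in time, bound the time-derivative for each fixed $t$ by the integrand appearing on the right-hand side of \eqref{eq:entropy}, and then integrate back from $0$ to $t$. The initial value is already under control: by construction $f_0^N=g_0^N$, so $\log(f_0^N/g_0^N)\equiv 0$ and hence $H_N(0)=0$; this is the only place the choice of initial datum for the Fokker--Planck equation is used.

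For the differential inequality — which is, up to the time-dependence of the generator, exactly Lemma~1.4 of Chapter~6 of \cite{kipnis2013scaling} — I would differentiate under the integral sign to get
\begin{equation}
\frac{d}{dt}H_N(t)=\int\big(\partial_tf_t^N\big)\log\frac{f_t^N}{g_t^N}\,d\rv d\pv+\int\partial_tf_t^N\,d\rv d\pv-\int\frac{f_t^N}{g_t^N}\,\partial_tg_t^N\,d\rv d\pv.
\end{equation}
The middle term vanishes because $\int f_t^N\,d\rv d\pv=1$ for every $t$. In the first term I substitute the Fokker--Planck equation $\partial_tf_t^N=\big(\mathcal G_N^{\taubar(t)}\big)^\dagger f_t^N$ and move the adjoint (with respect to Lebesgue measure on $\bb R^{2N}$) back onto the logarithm. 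Since the phase space is all of $\bb R^{2N}$ and the Dirichlet/Neumann conditions are encoded in the special form of the drift and noise at sites $1$ and $N$, this integration by parts produces no boundary terms, and
\begin{equation}
\int\big(\mathcal G_N^{\taubar(t)}\big)^\dagger f_t^N\,\log\frac{f_t^N}{g_t^N}\,d\rv d\pv=\int f_t^N\,\mathcal G_N^{\taubar(t)}\log\frac{f_t^N}{g_t^N}\,d\rv d\pv.
\end{equation}

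The crux is then the elementary entropy estimate, valid for positive densities $f,g$ on $\bb R^{2N}$,
\begin{equation}
\int f\,\mathcal G_N^{\taubar(t)}\log\frac{f}{g}\,d\rv d\pv\ \le\ \int\frac{f}{g}\,\big(\mathcal G_N^{\taubar(t)}\big)^\dagger g\,d\rv d\pv .
\end{equation}
To prove it I would split $\mathcal G_N^{\taubar(t)}=NL_N^{\taubar(t)}+N^2\big(\delta_1\Stilde_N^{\taubar(t)}+\delta_2S_N\big)$. For the first-order Liouville part the two sides are \emph{equal}: the vector field on $\bb R^{2N}$ associated to $L_N^{\taubar(t)}$ is divergence-free (the coefficient of each $\partial_{r_i}$ depends only on momenta, that of each $\partial_{p_i}$ only on positions), so integrating by parts and using $\int L_N^{\taubar(t)}f\,d\rv d\pv=0$ gives $\int f\,L_N^{\taubar(t)}\log(f/g)=-\int(f/g)L_N^{\taubar(t)}g=\int(f/g)\big(L_N^{\taubar(t)}\big)^\dagger g$. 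For each self-adjoint summand $\Dtilde_i^{*}\Dtilde_i$, $D_i^{*}D_i$ of the diffusive part — including the boundary pieces $D_0^{*}D_0$ and $\Dtilde_N^{*}\Dtilde_N$ — a direct computation with $h=f/g$ shows that the difference between the right- and left-hand sides equals a sum of nonnegative terms, the entropy production of $\sqrt{f/g}$, by the convexity of $s\mapsto s\log s$ (equivalently, a Cauchy--Schwarz/perfect-square rearrangement of the corresponding Dirichlet form). Discarding these nonnegative contributions yields the inequality, hence $\frac{d}{dt}H_N(t)\le\int\frac{f_t^N}{g_t^N}[\big(\mathcal G_N^{\taubar(t)}\big)^\dagger-\partial_t]g_t^N\,d\rv d\pv$; integrating from $0$ to $t$ and using $H_N(0)=0$ gives \eqref{eq:entropy}.

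The only real obstacle is the rigorous justification of these formal manipulations: differentiation under the integral sign, the integration-by-parts identities that define $\big(\mathcal G_N^{\taubar(t)}\big)^\dagger$ and suppress the contributions at infinity, positivity together with enough regularity and decay of $f_t^N$, and the finiteness of every integral written above. These rely on parabolic smoothing for the Fokker--Planck equation and the explicit Gaussian-type tails of $g_t^N$; as in the proof of Lemma~1.4 of Chapter~6 of \cite{kipnis2013scaling} one carries out the computation for a suitably regularised dynamics and removes the regularisation at the end, and since only the \emph{sign} of the discarded Dirichlet forms is needed, no finer control is required.
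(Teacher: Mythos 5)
Your proof is correct and follows exactly the standard argument that the paper delegates to Lemma~1.4 of Chapter~6 of Kipnis--Landim (Yau's relative entropy inequality): differentiate $H_N(t)$, substitute the Fokker--Planck equation, and bound $\int f\,\mathcal G_N\log(f/g)\le\int(f/g)\mathcal G_N^\dagger g$ by exact cancellation on the antisymmetric Liouville part and by discarding the nonnegative Dirichlet-form contribution on each $D_i^*D_i$ and $\Dtilde_i^*\Dtilde_i$ summand. The paper offers no independent proof beyond the citation and the remark $H_N(0)=0$, so there is nothing to contrast.
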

\begin{oss}
Recalling that $\dagger$ denotes the adjoint with respect to the Lebesgue measure, it is clear that  $\left(L_N^{\taubar(t)}\right)^\dagger=-L_N^{\taubar(t)}$. Moreover, for $1 \le i \le N-1$
\begin{align}
\begin{split}
\left(-\beta^{-1}D_i^* D_i\right)^\dagger & = \left[ -(p_{i+1}-p_i)D_i+\beta^{-1}D_i^2\right]^\dagger = \left(p_{i+1}-p_i+\beta^{-1}D_i\right) D_i + 2
\end{split}
\end{align}
and
\begin{align}
(-\beta^{-1}D_0^* D_0)^\dagger & = \left[ -p_1 D_0+ \beta^{-1}D_0^2\right]^\dagger = (p_1+\beta^{-1}D_0)D_0+1.
\end{align}
Therefore, we obtain
\begin{equation}
S_N^\dagger =2N-1+ \somma{i}{0}{N-1}D_i^\flat D_i,
\end{equation}
where
\begin{equation}
D_i^\flat := p_{i+1}-p_i+\beta^{-1}D_i, \qquad 1 \le i \le N-1
\end{equation}
and
\begin{equation}
D_0^\flat := p_1+ \beta^{-1}D_0.
\end{equation}
Similarly, we obtain
\begin{equation}
\left(\Stilde_N^{\taubar(t)} \right)^\dagger = \somma{i}{1}N \left[\Dtilde^\flat_i \Dtilde_i+V''(r_{i+1})+V''(r_i) \right], \qquad V''(r_{N+1})\equiv 0,
\end{equation}
where
\begin{equation}
D_i^\flat := V'(r_{i+1})-V'(r_i)+ \beta^{-1}\Dtilde_i, \qquad 1 \le i \le N-1
\end{equation}
and
\begin{equation}
\Dtilde^\flat_N := \taubar(t)-V'(r_N)+\beta^{-1} \Dtilde_N.
\end{equation}
\end{oss}
We will now evaluate the right-hand side of \eqref{eq:entropy}. In the following we shall denote by $a_N(t)$ a generic function such that
\begin{equation}
\lim_{N\to +\infty} \frac{1}{N} \int_0^t \int a_N(s) f_s^N\, d\pv d\rv ds =0.
\end{equation}
\begin{lem}[Liouville generator] 
\begin{equation} \label{eq:liouville}
\frac{N L_N^{\taubar(t)} g_t^N}{g_t^N}=\beta\somma{j}{1}{N} \left\{\partial_x \TAU{j}\left[ \PI{j}-p_{j-1}\right]+\partial_x  \PI{j} \left[\TAU{j}-V'(r_j)\right] \right\} + a_N(t).
\end{equation}
\end{lem}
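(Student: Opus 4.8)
The plan is to compute the action of $N L_N^{\taubar(t)}$ on $g_t^N$ directly, using the explicit product form of the local Gibbs measure. Writing $\log g_t^N = \sum_{j=1}^N \bigl[ \beta \TAU{j} r_j + \beta \PI{j} p_j - \beta(\tfrac{p_j^2}{2} + V(r_j)) - G(\TAU{j}, \PI{j}, \beta) \bigr]$, the first step is to observe that for any first-order operator $\mathcal{D}$ one has $\mathcal{D} g_t^N / g_t^N = \mathcal{D} \log g_t^N$, so it suffices to apply $N L_N^{\taubar(t)}$ to $\log g_t^N$. Since $L_N^{\taubar(t)}$ is a derivation in the variables $(\rv, \pv)$ only (it does not see the explicit $x = i/N$ dependence sitting in $\TAU{i}$ and $\PI{i}$, which are treated as frozen coefficients), I would just differentiate term by term: $\partial_{r_j} \log g_t^N = \beta \TAU{j} - \beta V'(r_j)$ and $\partial_{p_j} \log g_t^N = \beta \PI{j} - \beta p_j$.

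Next I would plug these into the explicit form of $L_N^{\taubar(t)}$ from the excerpt. The term $p_1 \partial_{r_1} + \sum_{i=2}^N (p_i - p_{i-1}) \partial_{r_i}$ acting on $\log g_t^N$ yields $\beta \sum_{i=1}^N (p_i - p_{i-1})\bigl(\TAU{i} - V'(r_i)\bigr)$ with the convention $p_0 = 0$, and the momentum part $\sum_{i=1}^{N-1}(V'(r_{i+1}) - V'(r_i))\partial_{p_i} + (\taubar(t) - V'(r_N))\partial_{p_N}$ yields $\beta \sum_{i=1}^{N-1}(V'(r_{i+1}) - V'(r_i))(\PI{i} - p_i) + \beta(\taubar(t) - V'(r_N))(\PI{N} - p_N)$. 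The key manipulation is then a summation by parts (an Abel/discrete integration by parts) on the momentum sum, moving the discrete gradient $\nabla V'(r_i) = V'(r_{i+1}) - V'(r_i)$ off $V'$ and onto the factor $\PI{i} - p_i$, so that the telescoping boundary contributions at $i = N$ cancel against the $\taubar(t)$ boundary term (using $\taubar(t) = \TAU{N}$, which holds since $\hat\tau(t,1) = \taubar(t)$ and $r(t,1) = \ell(\taubar(t))$). After this rearrangement, both sums are brought into the common form appearing on the right-hand side of \eqref{eq:liouville}, with $\partial_x \TAU{j}$ and $\partial_x \PI{j}$ arising as the discrete gradients of the smooth frozen profiles.

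The last step is to control the discretization error: replacing the exact discrete gradient $N(\hat\tau(t,\tfrac{j}{N}) - \hat\tau(t,\tfrac{j-1}{N}))$ by the partial derivative $\partial_x \hat\tau(t,\tfrac{j}{N})$ (and similarly for $p$) produces an error of order $N \cdot O(N^{-2}) = O(N^{-1})$ per site by Taylor's theorem, using that $r$, $p$ (hence $\hat\tau = \tau(r)$) are $C^2$ in space uniformly on $[0,T]$ as guaranteed by \cite{AlasioMarchesani}. Summing over the $N$ sites gives a total error of order $O(1)$ in absolute value, but this is not yet enough to be an $a_N(t)$; one must check that, after dividing by $N$ and integrating $f_s^N$ against it, it vanishes. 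Since the error is uniformly bounded by a constant times $\frac{1}{N}\sum_j (|r_j| + |p_j| + 1)$ type expressions (the gradients of the profiles are bounded, and what multiplies them are the fluctuating quantities $p_j - p_{j-1}$ and discrete gradients of $V'(r_j)$), and $f_s^N$ has finite first moments uniformly in $N$ (a standard a priori bound from the entropy estimate, or directly from energy estimates on the dynamics), the $\frac{1}{N}$ prefactor kills it. I expect the main obstacle to be precisely this error bookkeeping at the boundary: making sure the Neumann-type boundary sites ($r_1$ with $\partial_x \hat\tau(t,0) = 0$, and $p_N$ with $\partial_x p(t,1) = 0$) do not generate an uncontrolled surface term of order $N$ — here one uses that the relevant boundary derivatives of the macroscopic profile vanish, so the naive order-$N$ boundary contribution is actually order $O(1)$ and gets absorbed into $a_N(t)$ as well.
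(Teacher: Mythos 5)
Your direct computation reproduces the content of Lemma 2 of the cited Even--Olla reference (which the paper simply invokes), so your plan is on the right track. But there is a genuine gap: the summation by parts and the cancellation at $i=N$ do \emph{not} bring the expression directly into the form on the right-hand side of \eqref{eq:liouville}. After Abel summation, combining the $V'(r_N)$ pieces and using $\hat\tau_N = \taubar(t)$, one is left with a surviving deterministic boundary term
\begin{equation*}
\beta N\,\taubar(t)\,p(t,1) \;-\; \beta\somma{j}{1}{N}\left[\partial_x\TAU{j}\,p_{j-1} + \partial_x\PI{j}\,V'(r_j)\right] + a_N(t),
\end{equation*}
and the first summand is of order $N$, not something that cancels or goes into $a_N(t)$. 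To match the stated right-hand side, one must still rewrite
\begin{equation*}
\beta N\,\taubar(t)\,p(t,1) \;=\; \beta\somma{j}{1}{N}\left[\partial_x\TAU{j}\,\PI{j} + \partial_x\PI{j}\,\TAU{j}\right] + a_N(t),
\end{equation*}
which follows from $\taubar(t)p(t,1) = \hat\tau(t,1)p(t,1) - \hat\tau(t,0)p(t,0) = \int_0^1\partial_x[\hat\tau p]\,dx$ and a Riemann-sum approximation. This identity hinges crucially on the \emph{Dirichlet} condition $p(t,0)=0$; your proposal instead points to the Neumann conditions $\partial_x\hat\tau(t,0)=0$ and $\partial_x p(t,1)=0$ as the ones controlling the boundary, but those play no role in this particular lemma (they enter in the viscosity lemmas). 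Without this final step, your argument produces the intermediate expression with the uncontrolled $\beta N\taubar(t)p(t,1)$ dangling, rather than the claimed form. The discretization-error bookkeeping you describe is otherwise sound.
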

\begin{proof}
By Lemma 2 of \cite{even2010hydrodynamic}, with $\lambda_1 = \beta \hat \tau$, $\lambda_2 = \beta p$ and $\lambda_3 = - \beta$ we obtain
\begin{align}
 \frac{N L_N^{\taubar(t)} g_t^N}{g_t^N}=- \beta\somma{j}{1}{N} \left[\partial_x \TAU{j}p_{j-1}+\partial_x  \PI{j} V'(r_j) \right] + \beta N p(1,t) \taubar(t)+ a_N(t).
\end{align}
	Then note that we can write
	\begin{equation}
	p(1,t)\taubar(t) = \int_0^1 \frac{\partial}{\partial x} \left[ \hat \tau(t,x)p(t,x)\right] dx= \int_0^1\left[\partial_x \hat \tau(t,x)p(t,x)+\partial_xp(t,x)\hat \tau(t,x)\right]dx.
	\end{equation}
	Thus we have
	\begin{equation}
	\beta N \taubar(t)p(1,t) = \beta \somma{j}{1}{N} \left[ \partial_x \TAU{j} \PI{j}+ \partial_x \PI{j} \TAU{j} \right] + a_N(t),
	\end{equation}
which completes the proof.
\end{proof}
In the same way, we obtain
\begin{lem}[Explicit time derivative]
\begin{align}
\begin{split}
\frac{\partial_t  g_t^N}{g_t^N}  =& \beta \somma{j}{1}{N} \left[ \tau' \left( \RHO{j}\right) \partial_x \PI{j}+ \delta_1 \tau' \left( \RHO{j} \right) \partial_{xx} \TAU{j} \right] \left[ r_j- \RHO{j}\right]+
\\
& + \beta \somma{j}{1}{N} \left[ \partial_x \TAU{j}+ \delta_2 \partial_{xx}\PI{j}\right]\left[p_j-\PI{j}\right].
\end{split}
\end{align}
\end{lem}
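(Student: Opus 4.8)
The plan is to compute $\partial_t g_t^N / g_t^N$ directly from the explicit product form of the local Gibbs density and then substitute the macroscopic system \eqref{system_ptau}. Taking logarithms,
\begin{equation}
\log g_t^N = \somma{j}{1}{N}\left[ \beta \TAU{j}\, r_j + \beta \PI{j}\, p_j - \beta\left(\frac{p_j^2}{2}+V(r_j)\right) - G\!\left(\TAU{j},\PI{j},\beta\right)\right].
\end{equation}
The microscopic variables $r_j,p_j$ and the terms $p_j^2/2+V(r_j)$ carry no time dependence, so differentiating in $t$ gives
\begin{equation}
\frac{\partial_t g_t^N}{g_t^N} = \beta \somma{j}{1}{N}\left[ r_j\, \partial_t \TAU{j} + p_j\, \partial_t \PI{j}\right] - \somma{j}{1}{N}\partial_t\, G\!\left(\TAU{j},\PI{j},\beta\right).
\end{equation}
Here $\partial_t \TAU{j}$ and $\partial_t \PI{j}$ are the genuine time derivatives of the smooth macroscopic profiles evaluated at the grid point $j/N$; in contrast with the Liouville lemma, no discretisation enters this step, so no error term $a_N(t)$ will be produced — the identity will be exact.

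The second step is to expand $\partial_t G$ by the chain rule and use the thermodynamic identities from Section 2. From $\tfrac{\partial G}{\partial \tau}=\beta\,\ell(\tau,\beta)$ together with the identification $r(t,x)=\ell(\hat\tau(t,x))$, and from $\tfrac{\partial G}{\partial \pbar}=\beta\,\pbar$, one obtains
\begin{equation}
\partial_t\, G\!\left(\TAU{j},\PI{j},\beta\right) = \beta\, \RHO{j}\, \partial_t \TAU{j} + \beta\, \PI{j}\, \partial_t \PI{j}.
\end{equation}
Substituting this back collapses the two sums into the centred form
\begin{equation}
\frac{\partial_t g_t^N}{g_t^N} = \beta \somma{j}{1}{N}\left[ \left(r_j - \RHO{j}\right)\partial_t \TAU{j} + \left(p_j - \PI{j}\right)\partial_t \PI{j}\right],
\end{equation}
which is already the structure required by the relative entropy method (the derivative pairs against the fluctuations $r_j-\RHO{j}$, $p_j-\PI{j}$).

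Finally I would insert the macroscopic equations \eqref{system_ptau}: replacing $\partial_t \TAU{j}$ by $\tau'(\RHO{j})\partial_x \PI{j} + \delta_1 \tau'(\RHO{j})\partial_{xx}\TAU{j}$ and $\partial_t \PI{j}$ by $\partial_x \TAU{j} + \delta_2 \partial_{xx}\PI{j}$ yields exactly the claimed formula. There is no serious obstacle: the only point demanding care is the correct evaluation of the Gibbs-potential derivatives via $r=\ell(\hat\tau)$ and $\attesa{p_1}=\pbar$, after which the computation is purely algebraic. The exactness of the identity (absence of $a_N(t)$) relies on the macroscopic profiles solving \eqref{system_ptau} pointwise, guaranteed by the regularity ${\bf u}\in C^1(\mathbb R_+;C^0([0,1]))\cap C^0(\mathbb R_+;C^2([0,1]))$ imported from \cite{AlasioMarchesani}.
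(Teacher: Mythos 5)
Your proof is correct and is essentially the argument the paper implicitly invokes when it writes ``In the same way, we obtain'': differentiate $\log g_t^N$ in $t$, evaluate $\partial_t G$ via $\partial_\tau G = \beta\ell$, $\partial_{\bar p} G = \beta\bar p$, use $\ell(\hat\tau(t,x)) = r(t,x)$, and substitute the system \eqref{system_ptau}. Your observation that this is an \emph{exact} identity (no $a_N(t)$) is also right, and is consistent with the absence of the error term in the lemma's statement.
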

\begin{lem}[Physical viscosity]
\begin{equation}
\frac{N^2 S_N^\dagger  g_t^N}{g_t^N}   =\beta \somma{j}{1}{N-1}\partial_{xx}\PI{j}\left[ p_j-\PI{j}\right]+ a_N(t).
\end{equation}
\end{lem}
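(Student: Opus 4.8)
The plan is to compute $N^2 S_N^\dagger g_t^N/g_t^N$ explicitly, starting from the expression $S_N^\dagger=2N-1+\sum_{i=0}^{N-1}D_i^\flat D_i$ recorded in the Remark and exploiting that $\log g_t^N$ is affine in $(r_1,\dots,r_N)$ and quadratic in $(p_1,\dots,p_N)$, with $\partial_{p_k}\log g_t^N=\beta(\PI{k}-p_k)$. The $O(N)$ answer will come out of a delicate cancellation of the constant $2N-1$ followed by two discrete summations by parts.

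First I would evaluate, for $1\le i\le N-1$, $D_i\log g_t^N=\beta[(\PI{i+1}-\PI{i})-(p_{i+1}-p_i)]$ and $\tfrac{D_i^2 g_t^N}{g_t^N}=(D_i\log g_t^N)^2+D_i^2\log g_t^N=(D_i\log g_t^N)^2-2\beta$ — the second-order term being the constant $-2\beta$ because $D_i\log g_t^N$ is affine — together with $D_0\log g_t^N=\beta(\PI{1}-p_1)$ and $\tfrac{D_0^2 g_t^N}{g_t^N}=(D_0\log g_t^N)^2-\beta$. Assembling $\tfrac{D_i^\flat D_i g_t^N}{g_t^N}=(p_{i+1}-p_i)\tfrac{D_i g_t^N}{g_t^N}+\beta^{-1}\tfrac{D_i^2 g_t^N}{g_t^N}$ and using the identity $\delta\,\beta(\Delta-\delta)+\beta(\Delta-\delta)^2=\beta\Delta(\Delta-\delta)$ with $\delta=p_{i+1}-p_i$, $\Delta=\PI{i+1}-\PI{i}$, this collapses to $\beta(\PI{i+1}-\PI{i})[(\PI{i+1}-\PI{i})-(p_{i+1}-p_i)]-2$ for $1\le i\le N-1$, and to $\beta\PI{1}(\PI{1}-p_1)-1$ for $i=0$. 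The crucial point is that the $N-1$ constants $-2$ and the single $-1$ sum to exactly $-(2N-1)$, cancelling the $2N-1$ in $S_N^\dagger$; this is why the naively $O(N^3)$ quantity is in fact $O(N)$. Since $p_0=0$ and $p(t,0)=0$, setting $\PI{0}:=p(t,0)=0$ makes the $i=0$ term fit the generic form and yields
\[
\frac{S_N^\dagger g_t^N}{g_t^N}=\beta\sum_{i=0}^{N-1}(\PI{i+1}-\PI{i})\big[(\PI{i+1}-\PI{i})-(p_{i+1}-p_i)\big].
\]

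Next I would multiply by $N^2$ and introduce $\gamma_i:=N(\PI{i+1}-\PI{i})$, bounded uniformly in $i,N$ and close to $\partial_x p(t,i/N)$, so that $N^2\tfrac{S_N^\dagger g_t^N}{g_t^N}=\beta\sum_{i=0}^{N-1}\gamma_i^2-\beta N\sum_{i=0}^{N-1}\gamma_i(p_{i+1}-p_i)$. On the mixed sum, discrete summation by parts gives $N\sum_{i=0}^{N-1}\gamma_i(p_{i+1}-p_i)=N\gamma_{N-1}p_N-\sum_{i=1}^{N-1}\eta_i p_i$ with $\eta_i:=N(\gamma_i-\gamma_{i-1})=N^2(\PI{i+1}-2\PI{i}+\PI{i-1})$, the left boundary term vanishing identically because $p_0=0$. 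Writing $\sum_{i=0}^{N-1}\gamma_i^2=N\sum_{i=0}^{N-1}\gamma_i(\PI{i+1}-\PI{i})$ and applying the same summation by parts (with $\PI{\cdot}$ for $p_\cdot$ and $\PI{0}=0$) gives the exact relation $\sum_{i=0}^{N-1}\gamma_i^2=N\gamma_{N-1}\PI{N}-\sum_{i=1}^{N-1}\eta_i\PI{i}$. Combining the two, the $\eta_i$-sums merge into $\beta\sum_{i=1}^{N-1}\eta_i(p_i-\PI{i})$, up to the boundary remainder $\beta N\gamma_{N-1}(\PI{N}-p_N)$. Here the Neumann condition $\partial_x p(t,1)=0$ is essential: a Taylor expansion at $x=1$ gives $N\gamma_{N-1}=N^2\big(p(t,1)-p(t,1-\tfrac1N)\big)=-\tfrac12\partial_{xx}p(t,\xi_N)=O(1)$, so this remainder is a single $O(1+|p_N|)$ boundary term which, once integrated against $f_s^N$ and divided by $N$, is of type $a_N(t)$ — here I would invoke an a priori bound of order $N$ on $\sum_i\int p_i^2 f_s^N$, uniform on $[0,t]$.

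Finally I would replace $\eta_i$ by $\partial_{xx}\PI{i}$: since $p(t,\cdot)\in C^2([0,1])$, the symmetric second difference $\eta_i$ differs from $\partial_{xx}\PI{i}$ by at most the modulus of continuity $\omega_t(1/N)$ of $\partial_{xx}p(t,\cdot)$, which tends to $0$ uniformly in $i$; the error $\beta\sum_{i=1}^{N-1}(\eta_i-\partial_{xx}\PI{i})(p_i-\PI{i})$ is bounded by $\beta\,\omega_t(1/N)\sqrt N\,\big(\sum_i(p_i-\PI{i})^2\big)^{1/2}$, whose integral against $f_s^N$ divided by $N$ is $O(\omega_t(1/N))\to0$ by the same second-moment bound, hence an $a_N(t)$ term. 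This gives exactly $N^2 S_N^\dagger g_t^N/g_t^N=\beta\sum_{j=1}^{N-1}\partial_{xx}\PI{j}[p_j-\PI{j}]+a_N(t)$. I expect the main obstacle to lie not in the (short) algebra but in the disciplined bookkeeping of the last two paragraphs: one must check that every boundary term produced by the two summations by parts is either identically zero (thanks to $p_0=0$ and $p(t,0)=0$) or genuinely $o(N)$ after integration (thanks to $\partial_x p(t,1)=0$ and the energy bound), and that the replacement $\eta_i\mapsto\partial_{xx}\PI{i}$ costs only an $a_N(t)$; the single structural place where a cancellation could conceivably fail is the exact matching of $2N-1$ against the constant part of $\sum_i D_i^\flat D_i g_t^N/g_t^N$, which I would therefore carry through in full.
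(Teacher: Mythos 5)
Your argument is correct and follows essentially the same route as the paper: the algebraic expansion of $S_N^\dagger g_t^N/g_t^N$, the exact cancellation of the constant $2N-1$, and the summations by parts using $p(t,0)=0$, $\partial_xp(t,1)=0$ together with the second-moment bound to absorb boundary and discretization errors into $a_N(t)$ all match the paper's computation. The only cosmetic difference is that you treat the quadratic sum $\sum_j\big(\PI{j+1}-\PI{j}\big)^2$ by a second discrete Abel summation, whereas the paper converts it to the continuous identity $\int_0^1(\partial_xp)^2\,dx=-\int_0^1\partial_{xx}p\,p\,dx$ starting from $0=\tfrac{\beta}{N}\partial_xp(0,t)p(0,t)$ and then discretizes back, the two devices being equivalent.
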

\begin{proof}
\begin{equation}
S_N^\dagger g_t^N =  \somma{j}{0}{N-1} D^\flat_j  \left \{g_t^ND_j\somma{i}{1}{N} \left[ \beta \PI{i}p_i-\beta \frac{p_i^2}{2}\right]\right \}+ (2N-1)g_t^N
\end{equation}
\begin{align}
=& \beta\somma{j}{1}{N-1} D_j^\flat \left\{ g_t^N \left[ \PI{j+1}-\PI{j}-(p_{j+1}-p_j)\right]\right\}+
\\
&+\beta D_0^\flat\left\{g_t^N \left[\PI{1}-p_1\right]\right\}+ (2N-1)g_t^N \nonumber
\end{align}
\begin{align} \label{eq:phys1}
=& \beta g_t^N \somma{j}{1}{N-1} \left[ \PI{j+1}-\PI{j} \right] (p_{j+1}-p_j)-\beta g_t^N \somma{j}{1}{N-1}(p_{j+1}-p_j)^2+
\\
&+ \beta g_t^N\somma{j}{1}{N-1} \left[ \PI{j+1}-\PI{j}-(p_{j+1}-p_j)\right]^2- (2 N-1)g_t^N + \nonumber
\\
&+\beta g_t^N \PI{1} \left[\PI{1}-p_1\right]+ (2 N-1)g_t^N. \nonumber
\end{align}
\begin{align}
=&- \beta g_t^N \somma{j}{1}{N-1} \left[ \PI{j+1}-\PI{j} \right] (p_{j+1}-p_j)+
\\
&\label{eq:phys4}+ \beta g_t^N \somma{j}{1}{N-1} \left[ \PI{j+1}-\PI{j}\right]^2+\beta g_t^N \PI{1} \left[\PI{1}-p_1\right]. \nonumber
\end{align}
After a summation by parts and using $\partial_xp(1,t)=0$, we obtain
\begin{align}
&-\beta  g_t^N\somma{j}{1}{N-1}\left[ \PI{j+1}-\PI{j} \right] (p_{j+1}-p_j) \nonumber
\\
=& \beta   g_t^N\somma{j}{1}{N-1}  \left[ \PI{j+1}+\PI{j-1}-2\PI{j}\right]p_j+
\\
&- \beta g_t^N  \left[ p(1,t)-\PI{N-1}\right]p_N+\beta  g_t^N\left[\PI{1}-p(0,t)\right]p_1 \nonumber
\\
\label{eq:phys2} =& \frac{\beta}{N^2}  g_t^N\somma{j}{1}{N-1} \partial_{xx}\PI{j}p_j + \beta  g_t^N \left[\PI{1}-p(0,t)\right]p_1 +\frac{ a_N(t)}{N^2}g_t^N.
\end{align}
Since $p(0,t) =0$, combining the boundary terms of  \eqref{eq:phys1} and \eqref{eq:phys2} gives
\begin{align}
\beta g_t^N \PI{1} \left[\PI{1}-p_1\right]+  \beta  g_t^N \left[\PI{1}-p(0,t)\right]p_1&= g_t^N\left[\PI{1}\right]^2  \nonumber
\\
&= \mathcal O \left( \frac{1}{N^2}\right)
\end{align}
and are thus negligible.

Finally, we write
\begin{align}
0&= \frac{\beta }{N} \partial_xp(0,t) p(0,t) \nonumber
\\
&= -\frac{\beta }{N} \int_0^1 \frac{\partial}{\partial x}\left[\partial_x p(x,t) p(x,t) \right]dx \nonumber
\\
&=-\frac{\beta }{N} \int_0^1 \partial_{xx}p(x,t) p(x,t)dx -\frac{\beta}{N}\int_0^1\left[\partial_x p(x,t) \right]^2 dx \nonumber
\\
& = -\frac{\beta }{N^2} \somma{j}{1}{N-1} \partial_{xx} \PI{j}\PI{j}-\beta  \somma{j}{1}{N-1} \left[ \PI{j+1}- \PI{j} \right]^2+\frac{a_N(t)}{N^2},
\end{align}
which yelds
\begin{align}
\label{eq:phys3}  \beta g_t^N \somma{j}{1}{N-1} \left[ \PI{j+1}- \PI{j} \right]^2= -\frac{\beta g_t^N}{N^2} \somma{j}{1}{N-1} \partial_{xx} \PI{j}\PI{j}+\frac{a_N(t)}{N^2}
\end{align}
Thus, combining \eqref{eq:phys4}, \eqref{eq:phys2} and \eqref{eq:phys3} and using the fact that the boundary terms are negligible lead to the conclusion.
\end{proof}
\begin{lem}[Artificial viscosity]
\begin{equation}
\frac{N^2\left(\Stilde_N^{\taubar(t)}\right)^\dagger g_t^N}{g_t^N}  = \beta \somma{j}{1}{N-1}\partial_{xx}\TAU{j}\left[ V'(r_j)-\TAU{j}\right]+  \beta N \partial_x\hat \tau(1,t)\left[\taubar(t)-V'(r_N)\right]+a_N(t)
\end{equation}
\end{lem}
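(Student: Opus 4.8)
The plan is to follow the structure of the Physical viscosity lemma, with the stretches $r_i$ and the operators $\Dtilde_i,\Dtilde_i^\flat$ in place of the momenta $p_i$ and $D_i,D_i^\flat$. First I would insert the expression for $\left(\Stilde_N^{\taubar(t)}\right)^\dagger$ from the Remark and write $\Dtilde_i g_t^N=g_t^N\,a_i$ with $a_i:=\Dtilde_i\log g_t^N$; since $\log g_t^N$ depends on the positions only through $\beta\somma{k}{1}{N}\left[\TAU{k}r_k-V(r_k)\right]$, a direct computation gives
\[
a_i=\beta\left[\TAU{i+1}-\TAU{i}-\left(V'(r_{i+1})-V'(r_i)\right)\right],\qquad 1\le i\le N-1,
\]
and $a_N=\beta\left(V'(r_N)-\taubar(t)\right)$, where the boundary condition $\hat\tau(t,1)=\taubar(t)$ enters. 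Expanding $\Dtilde_i^\flat\Dtilde_i g_t^N/g_t^N=a_i\left(m_i+\beta^{-1}a_i\right)+\beta^{-1}\Dtilde_i a_i$, with $m_i$ the multiplicative part of $\Dtilde_i^\flat$, the key point — the analogue of the cancellation of the constant $2N-1$ in the physical case — is that $\beta^{-1}\Dtilde_i a_i=-\left(V''(r_{i+1})+V''(r_i)\right)$ cancels exactly the $V''$ terms carried by $\left(\Stilde_N^{\taubar(t)}\right)^\dagger$ (recall $V''(r_{N+1})\equiv0$).

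After this cancellation I would isolate the endpoint $i=N$, where $m_N+\beta^{-1}a_N=\left(\taubar(t)-V'(r_N)\right)+\left(V'(r_N)-\taubar(t)\right)=0$, so that the whole $i=N$ contribution vanishes identically. For $1\le i\le N-1$, setting $\phi_i:=\TAU{i}-V'(r_i)$ one has $a_i=\beta(\phi_{i+1}-\phi_i)$ and $m_i+\beta^{-1}a_i=\TAU{i+1}-\TAU{i}$, so the bulk collapses to the discrete Dirichlet form $\beta\somma{i}{1}{N-1}\left(\TAU{i+1}-\TAU{i}\right)(\phi_{i+1}-\phi_i)$. A discrete summation by parts rewrites this as $-\beta\sum_j\left(\TAU{j+1}+\TAU{j-1}-2\TAU{j}\right)\phi_j$ plus endpoint terms involving $\TAU{1},\TAU{2}$ and $\TAU{N-1},\TAU{N}$. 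Multiplying by $N^2$ and using $\hat\tau(t,\cdot)\in C^2([0,1])$, the bulk yields $\beta\somma{j}{1}{N-1}\partial_{xx}\TAU{j}\left(V'(r_j)-\TAU{j}\right)$ up to an $a_N(t)$ error; the Neumann condition $\partial_x\hat\tau(t,0)=0$ forces $\TAU{2}-\TAU{1}=\mathcal O(N^{-2})$, so the endpoint near $x=0$ is negligible, whereas near $x=1$ one has $N\left(\TAU{N}-\TAU{N-1}\right)\to\partial_x\hat\tau(1,t)$, which leaves the surviving boundary term $\beta N\,\partial_x\hat\tau(1,t)\left(\taubar(t)-V'(r_N)\right)$. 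Collecting the three pieces gives the statement.

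The step I expect to be the main obstacle is the control of the error terms. The interior finite-difference and Riemann-sum errors are deterministic and of size $o(1)$ per site, hence harmless after averaging against $\frac1N\int_0^t\!\int(\,\cdot\,)\,f_s^N\,d\rv d\pv$; but the endpoint corrections come multiplied by single-site random quantities such as $V'(r_1)$ and $V'(r_N)$, and verifying that the ones not retained in the statement satisfy the defining property of $a_N(t)$ requires an a priori energy estimate bounding $\frac1N\somma{i}{1}{N}\int|V'(r_i)|\,f_s^N\,d\rv d\pv$, together with some care near $x=1$ to separate cleanly the first-order piece $\beta N\,\partial_x\hat\tau(1,t)\left(\taubar(t)-V'(r_N)\right)$ from the lower-order remainder.
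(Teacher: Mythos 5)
Your argument is correct and follows essentially the same route as the paper: you compute $\Dtilde_i g_t^N$, observe the cancellation of the $V''$ terms, note that the $i=N$ contribution vanishes identically via $\hat\tau(t,1)=\taubar(t)$, and then sum by parts, with the Neumann condition $\partial_x\hat\tau(t,0)=0$ killing the $x=0$ endpoint and the $x=1$ endpoint giving the surviving $\beta N\,\partial_x\hat\tau(1,t)[\taubar(t)-V'(r_N)]$. The one cosmetic difference is that you fold the two bulk sums into the single discrete Dirichlet form $\beta\sum(\TAU{j+1}-\TAU{j})(\phi_{j+1}-\phi_j)$ and sum by parts once, whereas the paper sums by parts only on $-\beta\sum(\TAU{j+1}-\TAU{j})(V'(r_{j+1})-V'(r_j))$ and disposes of the purely deterministic $\beta\sum(\TAU{j+1}-\TAU{j})^2$ by adding and subtracting $\frac{\beta}{N}\partial_x\hat\tau(1,t)\taubar(t)$ expressed as a continuous integral; this is equivalent, and your version is slightly tidier. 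Your caveat about the single-site endpoint errors is well taken — for a term like $\mathcal O(1)V'(r_1)$ the average bound $\frac1N\sum_i\int|V'(r_i)|f_s^N\le C$ is not directly enough, but Cauchy--Schwarz together with the energy estimate $\int r_1^2 f_s^N\le CN$ gives $\frac1N\int|V'(r_1)|f_s^N=\mathcal O(N^{-1/2})$, which suffices.
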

\begin{proof}
By a calculation analogous to the one of the previous lemma, we have
\begin{align}
\left(\Stilde_N^{\taubar(t)}\right)^\dagger g_t^N =&- \beta g_t^N \somma{j}{1}{N-1} \left[ \TAU{j+1}-\TAU{j}\right]\left[V'(r_{j+1})-V'(r_j)\right] + 
\\
&+\beta g_t^N \somma{j}{1}{N-1}\left[\TAU{j+1}-\TAU{j}\right]^2+ \nonumber
\\
&+ \beta \Dtilde_N^\flat \left \{ g_t^N \Dtilde_N \left[\hat \tau(1,t)r_N-   V(r_N) \right]\right\} +  g_t^N V''(r_N) \nonumber
\end{align}
By a direct computation, and recalling that $\hat \tau(1,t)=\taubar(t)$,
\begin{equation}
\beta \Dtilde_N^\flat \left \{ g_t^N \Dtilde_N \left[\hat \tau(1,t)r_N-   V(r_N) \right]\right\} +  g_t^N V''(r_N) = 0.
\end{equation}
After a summation by parts, we obtain
\begin{align}
&- \beta \somma{j}{1}{N-1} \left[ \TAU{j+1}-\TAU{j}\right]\left[V'(r_{j+1})-V'(r_j)\right]  \nonumber
\\
=& \frac{\beta}{N^2} \somma{j}{1}{N-1} \partial_{xx}\TAU{j}V'(r_j)- \frac{\beta}{N} \partial_x\hat \tau(1,t) V'(r_N) + \frac{a_N(t)}{N^2}. 
\end{align}
The conclusion then follows after remembering that $\partial_x\hat \tau(0,t)=0$, adding and subtracting
\begin{align}
\frac{\beta}{N} \partial_x\hat \tau(1,t)\taubar(t)& = \frac{\beta}{N} \int_0^1\frac{\partial}{\partial x} \left[ \partial_x\hat \tau(x,t)\hat \tau(x,t)\right]dx \nonumber
\\
&= \frac{\beta}{N} \int_0^1 \partial_{xx}\hat \tau(t,x)\hat \tau(t,x)dx + \frac{\beta}{N} \int_0^1 \left[\partial_x\hat \tau(x,t)\right]^2dx
\end{align}
and replacing integrals by summations.
\end{proof}
We show that the error we make when replacing $\taubar(t)$ by $V'(r_N)$ is controlled by the relative entropy.
\begin{lem}
\begin{equation}
 \int_0^t \int  \left|\taubar(s)-V'(r_N)  \right| f_s^N d\rv d \pv ds   \le \frac{C}{N} \left( 1+t+ \int_0^t H_N(s)ds \right) + \frac{1}{2}\frac{H_N(t)}{N}
\end{equation}
for some $C >0$ independent of $N$.
\end{lem}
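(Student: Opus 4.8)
The plan is to combine the entropy inequality with the equation of motion for the boundary particle; throughout I abbreviate by $\langle\,\cdot\,\rangle_{f_s^N}$ the average against $f_s^N\,d\rv d\pv$ and set $\phi_s:=f_s^N/g_s^N$.

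First, by the entropy inequality used in the proof of \thref{thm:main}, for every $\sigma>0$ and $s\ge 0$,
\[
\langle|\taubar(s)-V'(r_N)|\rangle_{f_s^N}\ \le\ \frac1\sigma H_N(s)\ +\ \frac1\sigma\log\int e^{\sigma|\taubar(s)-V'(r_N)|}\,g_s^N\,d\rv d\pv .
\]
The observable depends only on $r_N$, and under $g_s^N$ the law of $r_N$ is the single‑site Gibbs measure of tension $\TAU N=\hat\tau(s,1)=\taubar(s)$ --- here the Dirichlet condition $\hat\tau(t,1)=\taubar(t)$ enters. This law is independent of $N$, it assigns $V'(r_N)$ mean $\taubar(s)$, and, $V$ being strongly convex and $\taubar$ bounded, $r_N$ has Gaussian‑type tails. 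Hence the exponential moment is finite uniformly in $s$ and $N$, but only of order $e^{C\sigma^2}$, because $V'$ is merely linearly bounded; no choice of $\sigma$ produces an estimate with the $N^{-1}$ weights of the statement, so the dynamics must be brought in.

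Second, I apply Dynkin's formula to $p_N$: in $\mathcal G_N^{\taubar(s)}$ only the $\partial_{p_N}$‑term of $L_N^{\taubar(s)}$ and the $D_{N-1}$‑piece of $S_N$ act on a function of $p_N$, and the last line of \eqref{eq:SDE} gives
\[
\int_0^t\langle\taubar(s)-V'(r_N)\rangle_{f_s^N}\,ds\ =\ \frac1N\big(\langle p_N\rangle_{f_t^N}-\langle p_N\rangle_{f_0^N}\big)\ +\ \delta_2 N\int_0^t\langle p_N-p_{N-1}\rangle_{f_s^N}\,ds .
\]
Here $\langle p_N\rangle_{f_0^N}=0$ since $p(0,\cdot)\equiv 0$, and the entropy inequality with the genuinely Gaussian $p_N$‑marginal of $g_s^N$ gives $\langle p_N^2\rangle_{f_s^N}\le C(1+H_N(s))$, hence $|\langle p_N\rangle_{f_t^N}|\le C\big(1+H_N(t)\big)^{1/2}$; a Young inequality converts $\tfrac1N|\langle p_N\rangle_{f_t^N}|$ into $\tfrac CN+\tfrac12\tfrac{H_N(t)}{N}$, which is the last term of the statement. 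The absolute value is then recovered by Cauchy--Schwarz, $\langle|\taubar(s)-V'(r_N)|\rangle_{f_s^N}\le\big(\langle(\taubar(s)-V'(r_N))^2\rangle_{f_s^N}\big)^{1/2}$, and by running the same argument with the time‑dependent test function $(\taubar(s)-V'(r_N))p_N$, whose generator produces $(\taubar(s)-V'(r_N))^2$ to leading order and whose other terms are of the types handled here and below.

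The main obstacle is the viscous contribution $\delta_2 N\int_0^t\langle p_N-p_{N-1}\rangle_{f_s^N}\,ds$. From $D_{N-1}g_s^N=\beta g_s^N\big[(\PI N-\PI{N-1})-(p_N-p_{N-1})\big]$, integration by parts in $p_{N-1},p_N$ yields $\langle p_N-p_{N-1}\rangle_{f_s^N}-(\PI N-\PI{N-1})=2\beta^{-1}\langle\sqrt{\phi_s}\,D_{N-1}\sqrt{\phi_s}\rangle_{g_s^N}$, which Cauchy--Schwarz bounds by $2\beta^{-1}\big(\langle(D_{N-1}\sqrt{\phi_s})^2\rangle_{g_s^N}\big)^{1/2}$, a piece of the physical‑noise Dirichlet form. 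The Neumann condition $\partial_x p(t,1)=0$ forces $\PI N-\PI{N-1}=O(N^{-2})$, so after a Young inequality $\delta_2 N\langle p_N-p_{N-1}\rangle_{f_s^N}\le \tfrac CN+\varepsilon\,\delta_2 N^2\langle(D_{N-1}\sqrt{\phi_s})^2\rangle_{g_s^N}$; integrating in $s$, the $N^2$‑weighted Dirichlet form is absorbed, for $\varepsilon$ small, by the entropy‑production (negative Dirichlet‑form) term carried by the entropy identity of which \eqref{eq:entropy} is a coarsening, and what remains is the $\tfrac CN\big(t+\int_0^t H_N(s)\,ds\big)$ contribution. I expect this step --- controlling the $N^2$‑scaled boundary gradient by the viscous dissipation --- to be the crux, and it is precisely there that the non‑vanishing momentum viscosity $\delta_2>0$ and the Neumann boundary condition on $p$ are essential.
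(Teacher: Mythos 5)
Your idea of applying Dynkin's formula to a boundary observable is on the right track, and you correctly diagnose that the bare entropy inequality cannot produce the $N^{-1}$ weights. However, the test function you choose, $p_N$, is the wrong one, and this creates a gap that cannot be repaired along the lines you sketch. The paper uses $q_N=\sum_{i=1}^N r_i$ instead, and this choice is not incidental: one has $\tilde S_N^{\taubar(t)}q_N=\taubar(t)-V'(r_N)$ \emph{and} $S_N q_N=0$, because the physical (momentum) noise does not act on the $r$--variables at all. Hence
\[
\taubar(t)-V'(r_N)=\frac{1}{\delta_1 N^2}\,\mathcal G_N^{\taubar(t)}q_N-\frac{1}{\delta_1 N}\,p_N ,
\]
and after integrating against $f_s^N\,ds$ the first term becomes a clean boundary term $\frac{1}{\delta_1 N^2}\big[\int q_Nf_t^N-\int q_Nf_0^N\big]$ and the second is $\frac{1}{\delta_1 N}\int_0^t\int p_Nf_s^N$; both are then bounded by the entropy inequality using $|q_N|\le\sum_i|r_i|$ and the Gaussian moment of $p_N$, giving precisely $\frac CN(1+t+\int_0^tH_N)+\frac12\frac{H_N(t)}{N}$. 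No Dirichlet form appears anywhere.

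Your choice $p_N$, by contrast, leaves behind the viscous boundary term $\delta_2 N\int_0^t\langle p_N-p_{N-1}\rangle_{f_s^N}ds$, and this term cannot be brought to the required order $O(1/N)$. Your decomposition $\langle p_N-p_{N-1}\rangle_{f_s^N}=(\PI N-\PI{N-1})+2\beta^{-1}\langle\sqrt{\phi_s}D_{N-1}\sqrt{\phi_s}\rangle_{g_s^N}$ and the estimate $\PI N-\PI{N-1}=O(N^{-2})$ are correct, but the Young step fails quantitatively: to absorb $\gamma\,\langle(D_{N-1}\sqrt{\phi_s})^2\rangle_{g_s^N}$ into the $\delta_2\beta^{-1}N^2$--weighted Dirichlet form one must take $\gamma\lesssim N^2$, and then the residual $\bigl(\delta_2N\beta^{-1}\bigr)^2/\gamma$ is $O(1)$, not $O(N^{-1})$. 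Integrating in $s$ produces an $O(t)$ leftover, a factor $N$ too large for the stated bound, and no other split of the Young inequality helps (taking $\gamma$ larger overshoots the available entropy production, taking it smaller makes the residual worse). Moreover, even granting the absorption, what you would prove is a bound on $\int_0^t\int(\taubar-V'(r_N))f_s^N$ \emph{plus} an explicit Dirichlet-form credit, which is a different, non-self-contained statement than the lemma. Finally, the passage to the absolute value via the test function $(\taubar(s)-V'(r_N))p_N$ is not justified; applying $\mathcal G_N^{\taubar(s)}$ to it generates, besides $N(\taubar-V'(r_N))^2$, terms such as $-V''(r_N)p_N(p_N-p_{N-1})$ from the Liouville part and $\tilde S_N$--terms in $V''(r_N)$, none of which you control.
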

\begin{proof}
We can write
\begin{align}
\taubar(t)- V'(r_N) = \Stilde_N^{\taubar(t)} \somma{i}{1}{N}  r_i& = \frac{1}{\delta_1 N^2} \mathcal G_N^{\taubar(t)} \somma{i}{1}{N} r_i- \frac{1}{\delta_1 N} \mathcal L_N^{\taubar(t)} \somma{i}{1}{N} r_i 	\nonumber
\\
&= \frac{1}{\delta_1 N^2}\mathcal G_N^{\taubar(t)} q_N - \frac{1}{\delta_1 N}p_N.
\end{align}
This yields
\begin{align}
\int_0^t \int \left[ \taubar(s)-V'(r_N)\right] f_s^N  d\rv d\pv ds =& \frac{1}{\delta_1 N^2}\int q_Nf_t^N d\rv d\pv- \frac {1}{\delta_1N^2} \int q_Nf_0^N d\rv d\pv + \nonumber
\\
&+ \frac{1}{\delta_1 N} \int_0^t \int p_Nf_s^N d\rv d\pv ds,
\end{align}
The conclusion then follows as a standard application of the entropy inequality. In fact,
\begin{align}
	\frac{1}{N} \int |p_N| f_s^N d \rv d \pv & \le \frac{1}{N} \log \int e^{|p_N|} g_s^N d \rv d \pv+ \frac{H_N(s)}{N} \nonumber
	\\
	& \le \frac{C}{N} + \frac{H_N(s)}{N}
\end{align}
Furthermore,
\begin{align}
	\frac{1}{N^2}\int|q_N| f_t^N d \rv d\pv & \le \frac{1}{N^2}\somma{i}{1}{N}\int |r_i| f_s^N d \rv d \pv \nonumber
	\\ 
	& \le \frac{1}{4 N^2}\somma{i}{1}{N}\log \int e^{4|r_i|} g_s^N d \rv d\pv +\frac{1}{4} \frac{H_N(t)}{N} \nonumber
	\\
	& \le \frac{C}{N} +\frac{1}{4} \frac{H_N(t)}{N}.
\end{align}
\end{proof}
So far we have obtained
\begin{align}
\frac{1}{2}\frac{H_N(t)}{N} &\le\frac{1}{N} \somma{i}{1}{N-1} \int_0^t \int\partial_x\TAU{i}(p_{i-1}-p_i) f_s^N d\rv d \pv ds+ 
\\
&+\frac{1}{N}\somma{i}{1}{N-1}\int_0^t \int \left[\partial_x \PI{i}+\delta_1 \partial_{xx}\TAU{i}\right] \times \nonumber
\\
 &\times \left\{V'(r_i)-\TAU{i}-\tau' \left( \RHO{i}\right)\left[r_i-\RHO{i}\right] \right\} f_s^N d\rv d \pv ds \nonumber
\\
&+\frac{C}{N} \int_0^t H_N(s) ds+\int_0^t \int a_N(s)  f_s^N d\rv d \pv ds. \nonumber
\end{align}
By a summation by parts, it is easy to see that the term
\begin{equation}
\frac{1}{N} \somma{i}{1}{N-1} \int_0^t \int\partial_x\TAU{i}(p_{i-1}-p_i) f_s^N d\rv d \pv ds
\end{equation}
vanishes as $N \to \infty$, up to terms proportional to $\int_0^t H_N(s)/N ds$. Thus, we shall discard it from now on.

The next step is to pass to averages on blocks of size $k \ll N$.  This will allow us to replace $V'$ by $\tau$ in the sense of \thref{thm:1block}. In order to introduce such blocks, we  cut away the boundaries  by restricting to configurations $\{ [Nl], \dots, N-[Nl] \}$, for some small $l>0$ such that $l \to 0$ after $N \to \infty$ and $lN \gg k$. This is done using the inequality (cf Proposition 4.5 of \cite{MarchOllaHyperClausius}), 
\begin{equation}
\left |\frac{1}{N} \somma{i}{1}{N} J\left(\frac{i}{N}\right) \psi(r_i,p_i) - \frac{1}{N} \somma{i}{[Nl]}{N-[Nl]} J \left(\frac{i}{N} \right) \frac{1}{2k+1} \sum_{|j-i| \le k} \psi(r_j,p_j) \right | \le C \left(l+\frac k N \right)^{1/2} \left( \frac 1 N \sum_{i=1}^N(r_i^2+p_i^2) \right)^{1/2}
\end{equation}
which holds for any smooth $J : [0,1] \to \bb{R}$ and any linearly growing $\psi: \bb{R}^2 \to \bb{R}$.  Since a standard application of the entropy inequality (cf Proposition 3.2 of \cite{MarchOllaHyperClausius}) yields the energy estimate
\begin{align}
 \int \frac 1 N \sum_{i=1}^N(r_i^2+p_i^2) f_t^N d \rv d\pv \le C,
\end{align}
we obtain
\begin{align}
\frac{H_N(t)}{N} \le&  \frac{1}{N}\somma{i}{[Nl]}{N-[Nl]} \int_0^t \int \left[\partial_x \PI{i}+\delta_1 \partial_{xx}\TAU{i}\right] \times
\\
& \times \left\{ \bar V'_{k,i}-\TAU{i}-\tau' \left( \RHO{i}\right)\left[\bar r_{k,i}-\RHO{i}\right] \right\} f_s^N d\rv d \pv ds  \nonumber
\\
&+\frac{C}{N} \int_0^t H_N(s) ds+\int_0^t \int a_{N, k,l}(s)  f_s^N d\rv d \pv ds , \nonumber
\end{align}
where we have set
\begin{equation}
\bar V'_{k,i}:= \frac{1}{2k+1}\sum_{|j-i|\le k}V'(r_j), \qquad  \bar r_{k,i}:= \frac{1}{2k+1}\sum_{|j-i|\le k}r_j,
\end{equation}
and where
\begin{equation}
\lim_{ l \to 0}\lim_{k\to \infty} \lim_{N \to \infty}\int_0^t \int a_{N,k,l}(s)  f_s^N d\rv d \pv ds  =0.
\end{equation}
We replace $\bar V'_{k,i}$ by $\tau (\bar r_{k,i})$ via the one block estimate, which proof can be found in Proposition A.2 of \cite{MarchOllaHyperClausius}.
\begin{thm}[One-block estimate] \thlabel{thm:1block}
\begin{equation}
\lim_{l \to 0}\lim_{k \to \infty}\lim_{N \to \infty} \frac{1}{N} \somma{i}{[Nl]}{N-[Nl]} \int_0^t \int \left ( \bar V'_{k,i} - \tau(\bar r_{k,i}) \right)^2 f_s^N d \rv d \pv ds = 0.
\end{equation}
\end{thm}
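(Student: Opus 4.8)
The plan is to establish the one-block estimate by the standard route: reduce the space-time average to an equilibrium (static) estimate on a finite block via the entropy production bound, and then kill the remaining term by the spectral gap / ergodicity of the noise restricted to a block. First I would recall that the entropy inequality lets us bound $\frac{1}{N}\sum_i \int_0^t\int (\bar V'_{k,i}-\tau(\bar r_{k,i}))^2 f_s^N\,d\rv\,d\pv\,ds$ by a convex combination of expectations under the reference Gibbs measures $g_s^N$ plus $H_N(s)/N$; but $H_N(s)/N$ is exactly what we are trying to show vanishes, so instead the cleaner route is the one already implicit in the paper: use the bound on the \emph{Dirichlet form} coming from the entropy production estimate (the $N^2 S_N$ and $N^2\tilde S_N$ terms in \eqref{eq:entropy}), which gives $\int_0^t \langle \sqrt{f_s^N}, (-S_N-\tilde S_N)\sqrt{f_s^N}\rangle\,ds \le CN$, i.e. a bound of order $1$ after dividing by $N$. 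This is the standard input of the one-block estimate and is what makes the argument self-contained rather than circular.

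Next I would localize. Fixing a block of size $2k+1$ centered at $i$ with $[Nl]\le i\le N-[Nl]$, the relevant piece of the Dirichlet form involves only the $\tilde D_j$ (and $D_j$) for $j$ in that block, and since the block is away from the boundary the boundary operators $D_0,\tilde D_N$ do not enter. Averaging over $i$ and using translation-type bounds, the problem reduces to showing that for the measure $f$ on a single block of size $2k+1$, with $\int (-\tilde S^{block}) \sqrt f \cdot \sqrt f$ small, the expectation of $(\bar V'_k - \tau(\bar r_k))^2$ is small. By the standard compactness/ergodic-decomposition argument (as in Guo–Papanicolaou–Varadhan), in the limit $N\to\infty$ the empirical block distribution converges to a mixture of measures invariant under the block dynamics; because the noise $\tilde S^{block}$ is ergodic on each hyperplane $\{\sum_{|j-i|\le k} r_j = \text{const}\}$ (this is where strong convexity of $V$ and the structure of $\tilde D_j$ matter), those invariant measures are the canonical Gibbs measures conditioned on the value of $\bar r_k$. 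On such a conditioned measure, the law of large numbers as $k\to\infty$ forces $\bar V'_k \to \tau(\bar r_k)$, since $\tau$ is precisely the function expressing $\langle V'(r_1)\rangle$ in terms of the mean elongation. Taking the limits in the order $N\to\infty$, then $k\to\infty$, then $l\to 0$ gives the claim.

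The details I would defer to the cited Proposition A.2 of \cite{MarchOllaHyperClausius}: the quantitative comparison between the full Dirichlet form and its restriction to a block, the equivalence of ensembles estimate controlling the gap between the microcanonical (conditioned) average of $V'$ and its canonical counterpart $\tau(\bar r_k)$ with an error $O(1/k)$, and the uniform integrability needed to pass to the limit (supplied by the energy estimate $\int \frac1N\sum_i(r_i^2+p_i^2)f_t^N \le C$ already quoted above). The main obstacle is the ergodicity of the block dynamics generated by $\tilde S^{block}$ on the fixed-$\bar r_k$ hyperplane: one must check that the operators $\tilde D_j = \partial_{r_{j+1}}-\partial_{r_j}$, which only see differences of consecutive $r$'s, generate a dynamics whose only conserved quantity on the block is the sum $\sum r_j$ — equivalently, that the associated graph (a path) is connected — and that the corresponding Dirichlet form has no zero modes other than functions of $\bar r_k$. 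Given strong convexity of $V$ this is exactly the setting in which the logarithmic Sobolev / spectral gap arguments of \cite{even2010hydrodynamic} and \cite{MarchOlla1} apply, so no new idea is needed beyond invoking those estimates; I would simply cite them and indicate that the boundary-free localization is what allows their direct use here.
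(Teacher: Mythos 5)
The paper does not in fact prove Theorem~\ref{thm:1block} in the text: it cites Proposition~A.2 of \cite{MarchOllaHyperClausius}, and the adjacent remark tells you what kind of proof lives there --- ``an explicit estimate which makes use of the fact that $\tau$ is linearly bounded,'' in contrast to the truncation-of-unbounded-variables route of \cite{olla1993hydrodynamical, olla2014microscopic}. Your skeleton (Dirichlet-form bound from entropy production, localization to interior blocks, conditional measures on fixed $\bar r_k$, equivalence of ensembles $\bar V'_k \to \tau(\bar r_k)$) is the right skeleton, and you correctly observe that only the $\tilde S$ noise and only the $r$ variables matter, so the picture is consistent with the cited proof at that level.

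Two points diverge from the paper, one of them a genuine misattribution. First, you claim the Dirichlet-form bound ``comes from \eqref{eq:entropy}.'' It does not: the Lemma stated in the paper is the weakened form of Yau's inequality in which the negative Dirichlet form on the right-hand side has already been discarded. To get $\int_0^t D_N(\sqrt{f^N_s};g^N_s)\,ds \le CN$ you must go back to the full version of Lemma~1.4, Ch.~6 of \cite{kipnis2013scaling} before that term is thrown away; the quantity you need is simply not present in \eqref{eq:entropy} as written. Second, and this is precisely what the remark after the theorem is flagging, you frame the core step as a soft Guo--Papanicolaou--Varadhan compactness/ergodic-decomposition argument. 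For unbounded $V'$ that route needs a tightness/truncation step --- exactly the ``cutting'' that \cite{olla1993hydrodynamical, olla2014microscopic} perform and that the present paper explicitly says it avoids. The cited Proposition~A.2 instead proceeds quantitatively, in the spirit you mention only secondarily: a spectral-gap/LSI estimate on the block (\cite{Landim:2002p9150}, used as in \cite{Fritz1, MarchOlla1}), an explicit equivalence-of-ensembles error, and the linear growth of $\tau$ together with the energy bound $\int \frac1N\sum_i(r_i^2+p_i^2)f^N_t \le C$ to close the moment estimates. Both approaches can be made to work, but they are different proofs; the soft GPV route would require you to supply the uniform-integrability arguments the paper is deliberately sidestepping, so simply citing Proposition~A.2 as backup for the compactness argument is not quite consistent --- that proposition proves the estimate by a different mechanism.
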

\begin{oss}
Note that we did not need to cut unbounded variables, as in \cite{olla1993hydrodynamical, olla2014microscopic}, but we perform, in the fashion of \cite{Fritz1, MarchOlla1,MarchOllaHyperClausius} , an explicit estimate which makes use of the fact that $\tau$ is linearly bounded.
\end{oss}
Thus we have obtained
\begin{align}
\frac{H_N(t)}{N} \le &  \frac{1}{N}\somma{i}{[Nl]}{N-[Nl]} \int_0^t \int \left[\partial_x \PI{i}+\delta_1 \partial_{xx}\TAU{i}\right] \times
\\
& \times \left\{ \tau(\bar r_{k,i})-\TAU{i}-\tau' \left( \RHO{i}\right)\left[\bar r_{k,i}-\RHO{i}\right] \right\} f_s^N d\rv d \pv ds  \nonumber
\\
& +\frac{C}{N} \int_0^t H_N(s) ds+\int_0^t \int a_{N,k,l}(s)  f_s^N d\rv d \pv ds. \nonumber
\end{align}
Next, we  write
\begin{align}
\frac{H_N(t)}{N} \le & \frac{1}{N} \somma{i}{[Nl]}{N-[Nl]} \int_0^t \int  \Omega \left(t, \frac i N, \bar r_{k,i} \right) f_s^N d\rv d\pv ds +
\\
&+\frac{C}{N} \int_0^t H_N(s) ds+\int_0^t \int a_{N,l,k}(s)  f_s^N d\rv d \pv ds, \nonumber
\end{align}
where we have set
\begin{equation}
\Omega( t , x, \xi) :=\left[ \partial_x p(t,x)+ \delta_1 \partial_{xx} \hat \tau(t,x)  \right]\left\{ \tau(\xi) - \hat \tau(t,x) - \tau'(r(t,x)) [\xi - r(t,x)]\right\}
\end{equation}
Note that $\Omega(t,x, r(t,x) ) = \partial_\xi \Omega(t,x, r(t,x)) = 0$.
 
Consequently, Varadhan's lemma \cite{kipnis2013scaling, varadhan1988large} applies as in Theorem 4 of \cite{olla2014microscopic}, and we obtain
\begin{equation}
\frac{H_N(t)} N \le C \int_0^t \frac{H_N(s)} N ds + \int_0^t R_{N,k,l}(s)ds,
\end{equation}
for some uniform constant $C$, where
\begin{equation} \label{RNkl}
\lim_{l \to 0}\lim_{k \to \infty}\lim_{N \to \infty}  \int_0^t R_{N,k,l}(s)ds = 0.
\end{equation}
It then follows by Gronwall inequality that
\begin{equation}
\frac{H_N(t)} N \le \frac{H_N(0)} N e^{Ct} + \int_0^t R_{N,k,l}(s) e^{C(t-s)}ds \le \frac{H_N(0)} N + e^{Ct} \int_0^t R_{N,k,l}(s) ds .
\end{equation}
This gives
\begin{equation}
\lim_{N \to \infty} \frac{H_N(t)}{N} = 0,
\end{equation}
since $H_N(0) =0$ and the fact that $R_{N,k,l} \to 0$ in the sense of \eqref{RNkl}.

\section{Thermodynamic consequences}
This final section is devoted to the study of the Thermodynamics for the macroscopic system obtained as result of the hydrodynamic limit. Recall that the temperature is fixed from the dynamics to the constant value $\beta^{-1}$. Therefore, we shall consider \emph{isothermal transformations} between equilibria given by different values of the external tension $\taubar$.

We shall rigorously derive the second law of Thermodynamics in the form of the Clausius inequality. Moreover, upon assuming that the energy converges (which the hydrodynamic limit does \emph{not} allow us to do), we will obtain the first law, too.

Such a procedure can be found in \cite{olla2014microscopic} for an isothermal transformation in a case where the macroscopic equation is a single diffusive equation. The underlying hydrodynamic limit was obtained there with a diffusive space-time scaling.

An early result about the Clausius inequality for a diffusive system can be found in Appendix B of \cite{MarchOlla1}. However, such a result is purely macroscopic and does not follow from the hydrodynamic limit.

In \cite{MarchOllaNote} the Clausius inequality has been proven for vanishing viscosity solutions to the hyperbolic system obtained from our system by taking $\delta_1=\delta_2=0$. This is done entirely at the macroscopic level and takes into account the fact that shocks might arise as the viscosity vanishes.

Finally, in \cite{MarchOllaHyperClausius} the Clausius inequality is derived directly from the microscopic system we consider in this article with the same space-time scaling but with \emph{vanishing viscosity}. The macroscopic system is then hyperbolic and we allow the presence of shocks.
\subsection{The Clausius inequality}
In order to highlight the fact that we are performing an isothermal transformation, we shall restore the dependencies on $\beta$ throughout this section. Define the total free energy at time $t$ as
\begin{equation}
\mathcal F(t):= \int_0^1 \left[ \frac{p(x,t)^2}{2} +F(r(x,t),\beta)\right]dx,
\end{equation}
where
\begin{equation}
 F(r,\beta) =\int_0^r \tau(\xi,\beta) d\xi
\end{equation}
is the equilibrium free energy.
\begin{prop} \label{prop:preClausius}
For any $t \ge 0$ and $\delta_1,\delta_2>0$,
\begin{equation} \label{eq:clausiust}
\mathcal F(t)-\mathcal F(0) = \int_0^t\taubar(s) \mathcal L'(s)ds-  \int_0^t \int_0^1\delta_2  \left(\partial_x p\right)^2+\delta_1  \left(\partial_x \tau(r,\beta)\right)^2 ds dx,
\end{equation}
where
\begin{equation}
\mathcal L(t) := \int_0^1 r(t,x)dx
\end{equation}
is the total length of the chain at time $t$.
\end{prop}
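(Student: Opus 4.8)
The plan is to establish \eqref{eq:clausiust} in differentiated form, namely
\begin{equation*}
\frac{d}{dt}\mathcal F(t) = \taubar(t)\,\mathcal L'(t) - \int_0^1 \Big[\delta_2 (\partial_x p)^2 + \delta_1 (\partial_x \tau(r,\beta))^2\Big]\,dx,
\end{equation*}
and then integrate in time from $0$ to $t$. Since $r,p \in C^1(\mathbb R_+; C^0([0,1]))\cap C^0(\mathbb R_+; C^2([0,1]))$ by the regularity result of \cite{AlasioMarchesani}, all differentiations under the integral sign and integrations by parts below are justified.

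First I would differentiate $\mathcal F$, using $\partial_r F(r,\beta) = \tau(r,\beta)$, to get $\frac{d}{dt}\mathcal F = \int_0^1 \big(p\,\partial_t p + \tau(r,\beta)\,\partial_t r\big)\,dx$. Substituting the two equations of \eqref{system_pr}, the transport contributions combine into a perfect derivative, $p\,\partial_x\tau(r,\beta) + \tau(r,\beta)\,\partial_x p = \partial_x\big(p\,\tau(r,\beta)\big)$, whose integral over $[0,1]$ equals $p(t,1)\taubar(t)$: here I use $p(t,0)=0$ together with $\tau(r(t,1),\beta)=\taubar(t)$, which is exactly the boundary condition $r(t,1)=\ell(\taubar(t))$.

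Next I would integrate the two viscous terms by parts. In $\int_0^1 \delta_2\, p\,\partial_{xx}p\,dx$ the boundary terms vanish since $p(t,0)=0$ and $\partial_x p(t,1)=0$, leaving $-\delta_2\int_0^1 (\partial_x p)^2\,dx$. In $\int_0^1 \delta_1\,\tau(r,\beta)\,\partial_{xx}\tau(r,\beta)\,dx$ the contribution at $x=0$ vanishes because $\partial_x\tau(r,\beta)|_{x=0} = \tau'(r(t,0),\beta)\,\partial_x r(t,0) = 0$, but the one at $x=1$ survives, giving $\delta_1\,\taubar(t)\,\partial_x\tau(r,\beta)|_{x=1} - \delta_1\int_0^1(\partial_x\tau(r,\beta))^2\,dx$. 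Collecting, $\frac{d}{dt}\mathcal F = \big(p(t,1) + \delta_1\,\partial_x\tau(r,\beta)|_{x=1}\big)\taubar(t) - \int_0^1\big[\delta_2(\partial_x p)^2 + \delta_1(\partial_x\tau(r,\beta))^2\big]\,dx$.

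The final step is to identify the boundary prefactor with $\mathcal L'(t)$: differentiating $\mathcal L$ and using the first equation of \eqref{system_pr}, $\mathcal L'(t) = \int_0^1\big(\partial_x p + \delta_1\partial_{xx}\tau(r,\beta)\big)\,dx = \big[p + \delta_1\partial_x\tau(r,\beta)\big]_{x=0}^{x=1} = p(t,1) + \delta_1\,\partial_x\tau(r,\beta)|_{x=1}$, again by $p(t,0)=0$ and $\partial_x\tau(r,\beta)|_{x=0}=0$. Integrating the resulting identity over $[0,t]$ yields \eqref{eq:clausiust}. There is no serious obstacle: this is a standard energy computation. The only point demanding care is the bookkeeping of boundary terms — precisely that the transport boundary term $p(t,1)\taubar(t)$ and the $\delta_1$-viscosity boundary term $\delta_1\taubar(t)\,\partial_x\tau(r,\beta)|_{x=1}$ reassemble into $\taubar(t)\,\mathcal L'(t)$, so that the Neumann conditions $\partial_x p(t,1)=0$ and $\partial_x r(t,0)=0$ (hence $\partial_x\tau(r,\beta)|_{x=0}=0$) are used essentially.
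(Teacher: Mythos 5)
Your proposal is correct and is essentially the same computation as the paper's proof: differentiate $\mathcal F$, substitute the system \eqref{system_pr}, integrate by parts using the mixed Dirichlet--Neumann boundary conditions, and observe that the two surviving boundary terms $p(t,1)\taubar(t)$ and $\delta_1\taubar(t)\partial_x\tau(r,\beta)|_{x=1}$ assemble into $\taubar(t)\mathcal L'(t)$ via the first conservation law. The only difference is cosmetic bookkeeping — the paper writes $p(1,s)=\int_0^1\partial_x p\,dx$ and substitutes the equation for $\partial_x p$ to produce $\mathcal L'(s)-\delta_1\partial_x\tau(r(1,s),\beta)$ before noting the cancellation, whereas you collect both boundary terms first and then identify their sum as $\mathcal L'(t)$.
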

\begin{proof}
Whenever there is an integral in both space and time, we shall omit to write the dependence of $r$ and $p$ on $x$ and $t$. Write
\begin{align}
\mathcal F(t)- \mathcal F(0) = & \int_0^t \frac{d}{ds} \mathcal F(s) ds=\int_0^t \int_0^1 p \partial_s p+\tau(r,\beta)\partial_s r\,ds dx
\\
=&\int_0^t \int_0^1p\partial_x \tau(r,\beta) +\tau(r,\beta)\partial_x p \, ds dx + \label{eq:Clau0}
\\
&+ \int_0^t \int_0^1\delta_2 p \partial_{xx} p+\delta_1 \tau(r,\beta) \partial_{xx}\tau(r,\beta)\, dsdx \nonumber
\end{align}
After an integration by parts in space, we have
\begin{align}
&\int_0^t \int_0^1p \partial_x \tau(r,\beta) +\tau(r,\beta)\partial_x p\, ds dx  \nonumber
\\
=& \int_0^t p(1,s)\tau(r(1,s),\beta)-p(0,s) \tau(r(0,s),\beta)ds\nonumber
\\
 =& \int_0^t \taubar(s) p(1,s)ds\nonumber
 \\
= & \int_0^t \taubar(s) \int_0^1 \partial_x p \, dx ds \nonumber
\\
=& \int_0^t\taubar(s)\int_0^1 \left[\partial_s  r dx -\delta_1   \partial_{xx} \tau(r,\beta)  \right] dx ds, \nonumber
\\
=& \int_0^t\taubar(s) \mathcal L'(s)ds-\delta_1 \int_0^t \taubar(s) \partial_x \tau(r(1,s),\beta)ds. \label{eq:Clau1}
\end{align}
where
\begin{equation}
\mathcal L(s):=  \int_0^1 r(x,s)dx.
\end{equation}
Finally, using the Neumann boundary conditions $\partial_x p(1,t)=\partial_x r(0,t)=0$ we obtain
\begin{align}
&\int_0^t \int_0^1p \delta_2 \partial_{xx} p+\delta_1 \tau(r,\beta) \partial_{xx}\tau(r,\beta)dsdx \nonumber
\\
=&-  \int_0^t \int_0^1\delta_2  \left(\partial_xp\right)^2+\delta_1  \left(\partial_x \tau(r,\beta)\right)^2 ds dx + \delta_1 \int_0^t \taubar(s) \partial_x  \tau \left(r(1,s),\beta\right) ds. \label{eq:Clau2}
\end{align}
When we use \eqref{eq:Clau1} and \eqref{eq:Clau2} in \eqref{eq:Clau0}, the boundary terms cancel exactly, and we get the conclusion.
\end{proof}
In order to obtain the Clausius inequality from the previous lemma we shall define an isothermal thermodynamic transformation as follows. Recall that the system at time zero is at equilibrium with tension $\bar \tau(0) := \tau_0 \in \mathbb R$ and temperature $\beta^{-1}$ namely
\begin{align}
p(0,x) = 0 \qquad \tau(r(0,x),\beta) = \tau_0 \qquad \forall x \in [0,1].
\end{align}
In particular, we have
\begin{align}
\mathcal F(0) = F(\ell(\tau_0),\beta).
\end{align}
Now we take $\bar \tau$ to vary smoothly from $\tau_0$ to $\tau_1 \in \mathbb R$ in a finite time $T_\star$ and to stay at the value $\tau_1$ for all subsequent times. Then, after waiting a long time, the system reaches a new equilibrium at tension $\tau_1$ and temperature $\beta^{-1}$, in the sense of the following
\begin{prop} \label{prop:asymp}
\begin{equation}
 \lim_{t \to \infty} \mathcal F(t) = F(\ell(\tau_1),\beta).
\end{equation}
\end{prop}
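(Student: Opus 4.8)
The plan is to turn the free-energy balance of Proposition~\ref{prop:preClausius} into a Lyapunov functional governing the relaxation of the solution, and then to identify its long-time limit with the equilibrium free energy at tension $\tau_1$. First I would observe that for $t \ge T_\star$ one has $\taubar(t) = \tau_1$, so differentiating \eqref{eq:clausiust} gives, for all such $t$,
\begin{equation}
\frac{d}{dt}\big(\mathcal F(t) - \tau_1 \mathcal L(t)\big) = -\int_0^1 \big(\delta_2 (\partial_x p)^2 + \delta_1 (\partial_x \tau(r,\beta))^2\big)\,dx \le 0 .
\end{equation}
Hence $\mathcal G(t) := \mathcal F(t) - \tau_1 \mathcal L(t) = \int_0^1\big(\tfrac{p^2}{2} + F(r,\beta) - \tau_1 r\big)\,dx$ is non-increasing on $[T_\star,\infty)$. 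Since $\partial_r F(\cdot,\beta) = \tau(\cdot,\beta)$ and $\tau(\ell(\tau_1),\beta) = \tau_1$, the map $r \mapsto F(r,\beta) - \tau_1 r$ attains its (strict) minimum at $r = \ell(\tau_1)$, so the integrand is bounded below by $\mathcal G_\infty := F(\ell(\tau_1),\beta) - \tau_1 \ell(\tau_1)$ and therefore $\mathcal G(t) \ge \mathcal G_\infty$. Thus $\mathcal G(t)$ decreases to a limit $\mathcal G_\star \ge \mathcal G_\infty$ as $t \to \infty$.

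Next I would integrate the identity above over $[T_\star,\infty)$, obtaining
\begin{equation}
\int_{T_\star}^{\infty}\int_0^1 \big(\delta_2 (\partial_x p)^2 + \delta_1 (\partial_x \tau(r,\beta))^2\big)\,dx\,dt = \mathcal G(T_\star) - \mathcal G_\star < \infty ,
\end{equation}
so that, since $\delta_1,\delta_2 > 0$, there is a sequence $t_n \to \infty$ (with $t_n \ge T_\star$) along which $\partial_x p(t_n,\cdot) \to 0$ and $\partial_x \tau(r(t_n,\cdot),\beta) \to 0$ in $L^2(0,1)$. Using the boundary conditions $p(t_n,0) = 0$ and $\tau(r(t_n,1),\beta) = \taubar(t_n) = \tau_1$, the fundamental theorem of calculus and \CS\ give, uniformly in $x \in [0,1]$, that $|p(t_n,x)| \le \|\partial_x p(t_n,\cdot)\|_{L^2} \to 0$ and $|\tau(r(t_n,x),\beta) - \tau_1| \le \|\partial_x \tau(r(t_n,\cdot),\beta)\|_{L^2} \to 0$; since $\ell(\cdot,\beta)$ is continuous this forces $r(t_n,\cdot) \to \ell(\tau_1)$ uniformly. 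Passing to the limit in $\mathcal G(t_n) = \int_0^1\big(\tfrac{p^2}{2} + F(r,\beta) - \tau_1 r\big)\,dx$ then gives $\mathcal G_\star = \mathcal G_\infty$.

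Finally I would upgrade this to a statement about the full limit. By the two steps above, $0 \le \mathcal G(t) - \mathcal G_\infty \to 0$. Writing the integrand of $\mathcal G(t) - \mathcal G_\infty$ as $\tfrac{p^2}{2} + \big(F(r,\beta) - F(\ell(\tau_1),\beta) - \tau_1(r - \ell(\tau_1))\big)$ and using that $F(\cdot,\beta)$ is $C^2$ with $\partial_r F(\ell(\tau_1),\beta) = \tau_1$ and $\partial_r^2 F(\cdot,\beta) = \tau' \ge c > 0$ bounded away from zero, Taylor's theorem yields $F(r,\beta) - F(\ell(\tau_1),\beta) - \tau_1(r - \ell(\tau_1)) \ge \tfrac{c}{2}(r - \ell(\tau_1))^2$. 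Hence $\int_0^1 p(t,x)^2\,dx \to 0$ and $\int_0^1 (r(t,x) - \ell(\tau_1))^2\,dx \le \tfrac{2}{c}\big(\mathcal G(t) - \mathcal G_\infty\big) \to 0$; by \CS\ $\mathcal L(t) \to \ell(\tau_1)$, and therefore $\mathcal F(t) = \mathcal G(t) + \tau_1 \mathcal L(t) \to \mathcal G_\infty + \tau_1 \ell(\tau_1) = F(\ell(\tau_1),\beta)$, which is the claim.

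I expect the second step to be the main obstacle: a priori the Lyapunov functional could relax to a value strictly above $\mathcal G_\infty$, and excluding this — equivalently, showing that the only steady state compatible with the boundary data is $p \equiv 0$, $r \equiv \ell(\tau_1)$ — is precisely where one must combine the finiteness of the total dissipation with the Dirichlet conditions $p(t,0)=0$, $\tau(r(t,1),\beta)=\tau_1$ through Poincar\'e-type estimates. Throughout, differentiation under the integral sign and the use of Proposition~\ref{prop:preClausius} are legitimate thanks to the global regularity $r,p \in C^1(\mathbb R_+; C^0([0,1])) \cap C^0(\mathbb R_+; C^2([0,1]))$ furnished by \cite{AlasioMarchesani}.
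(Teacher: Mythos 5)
Your proof is correct, and it takes a genuinely different route from the paper's. The paper defines the normalised Lyapunov functional $\mathcal F_{\taubar(t)}(t)=\mathcal F(t)-\taubar(t)\mathcal L(t)+\hat G(\taubar(t))$, bounds the dissipation from below via the Poincar\'e inequality (using the Dirichlet conditions $p(t,0)=0$, $r(t,1)=\ell(\taubar(t))$) and the uniform convexity of $F(\cdot,\beta)-\taubar(t)\cdot+\hat G(\taubar(t))$, and thereby closes a differential inequality $\tfrac{d}{dt}\mathcal F_{\tau_1}\le -C\mathcal F_{\tau_1}$ for $t>T_\star$; Gronwall then gives \emph{exponential} decay of $\mathcal F_{\tau_1}(t)$, and Jensen's inequality identifies the limit of $\mathcal L(t)$. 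You instead use the same Lyapunov functional (without the $\hat G$ normalisation) in a LaSalle-type argument: monotonicity plus finiteness of the total dissipation force the dissipation to vanish along a subsequence; the fundamental theorem of calculus and Cauchy--Schwarz — invoking exactly the same Dirichlet boundary data — promote this to uniform convergence of $p$ and $\tau(r,\beta)$, which pins down the limit value $\mathcal G_\star=\mathcal G_\infty$; then Taylor with the lower bound $\tau'\ge c>0$ upgrades convergence to the full limit $t\to\infty$. The paper's route buys a quantitative exponential rate of relaxation; yours avoids Poincar\'e-plus-Gronwall machinery and the somewhat delicate Jensen step, at the cost of not producing a rate. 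Both rely on the same two ingredients — the boundary data that anchor $p$ and $\tau(r,\beta)$, and the uniform convexity of $F$ — so they are different packagings of the same underlying structure.
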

\begin{proof}
Define $\mathcal F_{\taubar(t)}(t) = \mathcal F(t)- \taubar(t) \mathcal L(t) + \hat G(\taubar(t))$, where $\hat G$ is the Legendre transform of $F(\cdot, \beta)$. Thanks to Proposition \ref{prop:preClausius}, recalling that $\hat G'=\ell$ and that $\tau'$ is positive and bounded away from zero we compute
\begin{align}
\frac{d}{dt}\mathcal F_{\taubar(t)} (t) &= -\taubar'(t) \mathcal L(t) + \ell(\taubar(t))\taubar'(t) - \int_0^1 \delta_2 (\partial_x p(t,x))^2 +\delta_1(\partial_x \tau(r(t,x), \beta))^2 dx \nonumber
\\
& \le  -\taubar'(t)  \left[\mathcal L(t)-\ell(\taubar(t))\right]  - C \int_0^1 (\partial_x p(t,x))^2 +(\partial_x r(t,x))^2 dx \nonumber
\\
&=-\taubar'(t)  \left[\mathcal L(t)-\ell(\taubar(t))\right]  - C\int_0^1 (\partial_x p(t,x))^2 + \left\{\partial_x[ r(t,x) -\ell(\taubar(t)) ]\right\}^2 dx.
\end{align}
Since $p(t, x)$ vanishes at $ x =0$ and $r(t,x)-\ell(\taubar(t))$ vanishes at $x=1$, we can apply Poincar\`e inequality in order to obtain
\begin{align} \label{eq:med}
\frac{d}{dt}\mathcal F_{\taubar(t)} (t)  \le- \taubar'(t)  \left[\mathcal L(t)-\ell(\taubar(t))\right]   - C\int_0^1  p(t,x)^2 +  \left[ r(t,x)-\ell(\taubar(t)) \right]^2 dx.
\end{align}
Observe that $F(r, \beta)-\taubar(t)r+\hat G(\taubar(t))$ is a uniformly convex function of $r$ and vanishes, together with its first derivative, if $r = \ell(\taubar(t))$. Hence, we may find  some constants $C_2 > C_1 > 0$ such that
\begin{align}
C_1 [r-\ell(\taubar(t))]^2 \le F(r,\beta)- \taubar(t) r + \hat G(\taubar(t)) \le C_2 [r-\ell(\taubar(t))]^2
\end{align}
and we may estimate the integral at the right hand side of \ref{eq:med} by $- C \mathcal F_{\taubar(t)}(t)$.

Furthermore, take $t > T_\star$, where $T_\star$ is such that $\taubar(t) =\tau_1$ on $[T_\star,+\infty)$. Then $\taubar'(t)=0$ and we obtain
\begin{align}
\frac{d}{dt} \mathcal F_{\tau_1}(t) \le -C \mathcal F_{\tau_1}(t), \qquad \forall t  > T_\star.
\end{align}
Thus, Gronwall's inequality yields
\begin{align}
\mathcal F_{\tau_1}(t)  \le \mathcal F_{\tau_1}(T_\star) e^{-C(t-T_\star)}, \qquad \forall t > T_\star
\end{align}
so that 
\begin{align} \label{eq:quasiaa}
0 = \lim_{t \to \infty} \mathcal F_{\tau_1}(t) & = \lim_{t\to\infty} \mathcal F(t) -\tau_1\lim_{t\to\infty}  \int_0^1 r(t,x)dx + \hat G(\tau_1) 
\\
&=\lim_{t \to\infty} \int_0^1 \frac{p(t,x)^2}{2} dx + \lim_{t\to \infty} \int_0^1 F(r(t,x),\beta)dx-\tau_1 \int_0^1r(t,x)dx + \hat G(\tau_1) . \nonumber
\end{align}
Since $F(r,\beta)-\tau_1 r + \hat G(\tau_1)$ is convex and non-negative, by Jensen's inequality we obtain
\begin{align}
 F \left(\lim_{t \to \infty} \int_0^1 r(t,x)dx,\beta \right) - \tau_1 \lim_{t\to \infty}\int_0^1 r(t,x) dx + \hat G(\tau_1) = 0.
\end{align}
This in turn implies
\begin{align}
\lim_{t\to\infty} \int_0^1 r(t,x) dx = \ell(\tau_1).
\end{align}
Finally, plugging this last relation into the first line of \eqref{eq:quasiaa} leads to the conclusion.
\end{proof}
Combining Propositions \ref{prop:preClausius} and \ref{prop:asymp} gives the following
\begin{thm}[Clausius inequality]
\begin{equation}
 F(\tau_1, \beta) -  F(\tau_0,\beta) \le  W,
\end{equation}
where
\begin{equation}
 F(\tau_i ,  \beta) := F\left( \ell(\tau_i ), \beta\right), \quad i =0,1
\end{equation}
is the the equilibrium free energy as function of  tension and temperature, and
\begin{equation} \label{eq:macrowork}
W := \int_0^\infty \taubar(s) \mathcal L'(s)ds
\end{equation}
is the total work done by the external force $\bar \tau$ during the transformation.
\end{thm}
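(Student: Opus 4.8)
The plan is to pass to the limit $t \to \infty$ in the identity \eqref{eq:clausiust} of Proposition \ref{prop:preClausius} and then to discard the manifestly non-negative viscous dissipation term. First I would record that the initial conditions \eqref{eq:ic_pr} give $p(0,x) = 0$ and $r(0,x) = \ell(\taubar(0)) = \ell(\tau_0)$ for all $x \in [0,1]$, so that
\begin{equation}
\mathcal F(0) = \int_0^1 F(\ell(\tau_0),\beta)\, dx = F(\ell(\tau_0),\beta) =: F(\tau_0,\beta).
\end{equation}
Then, by Proposition \ref{prop:asymp}, the total free energy converges, $\lim_{t\to\infty}\mathcal F(t) = F(\ell(\tau_1),\beta)$, so that the left-hand side of \eqref{eq:clausiust} has the limit $F(\tau_1,\beta) - F(\tau_0,\beta)$.

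The next step is to check that each term on the right-hand side of \eqref{eq:clausiust} converges as $t \to \infty$. For $t > T_\star$ the boundary tension is frozen at $\taubar \equiv \tau_1$, hence
\begin{equation}
\int_0^t \taubar(s)\, \mathcal L'(s)\, ds = \int_0^{T_\star}\taubar(s)\, \mathcal L'(s)\, ds + \tau_1\big(\mathcal L(t) - \mathcal L(T_\star)\big),
\end{equation}
and the proof of Proposition \ref{prop:asymp} has already established $\mathcal L(t) \to \ell(\tau_1)$; therefore the work integral converges to the finite quantity $W$ defined in \eqref{eq:macrowork}. Since the dissipation integral
\begin{equation}
\mathcal D(t) := \int_0^t\!\!\int_0^1 \delta_2 (\partial_x p)^2 + \delta_1 (\partial_x \tau(r,\beta))^2\, dx\, ds
\end{equation}
is non-decreasing in $t$ and, by \eqref{eq:clausiust}, equals $\int_0^t \taubar(s)\mathcal L'(s)\,ds - (\mathcal F(t)-\mathcal F(0))$, it too converges, to some finite limit $\mathcal D(\infty) \ge 0$.

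Letting $t \to \infty$ in \eqref{eq:clausiust} then yields the exact energy balance $F(\tau_1,\beta) - F(\tau_0,\beta) = W - \mathcal D(\infty)$, and dropping the non-negative term $\mathcal D(\infty)$ gives the Clausius inequality $F(\tau_1,\beta) - F(\tau_0,\beta) \le W$. I do not anticipate a genuine obstacle: the only delicate point is the joint convergence of $\mathcal F(t)$, of the work integral, and hence of the dissipation integral as $t\to\infty$, and this is supplied entirely by Propositions \ref{prop:preClausius} and \ref{prop:asymp} together with the eventual constancy of $\taubar$. One also reads off that equality (a reversible isothermal transformation) holds precisely when $\mathcal D(\infty) = 0$, i.e. in the quasi-static regime where $\partial_x p$ and $\partial_x \tau(r,\beta)$ vanish along the evolution.
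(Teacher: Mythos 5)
Your proposal is correct and follows precisely the route the paper intends: the paper states the Clausius inequality as a direct consequence of Propositions \ref{prop:preClausius} and \ref{prop:asymp}, i.e.\ taking $t\to\infty$ in \eqref{eq:clausiust}, using $\mathcal F(0)=F(\ell(\tau_0),\beta)$, $\mathcal F(t)\to F(\ell(\tau_1),\beta)$, and then discarding the non-negative dissipation term. You have merely spelled out the convergence of the work integral (via $\taubar\equiv\tau_1$ for $t>T_\star$ and $\mathcal L(t)\to\ell(\tau_1)$) and of the dissipation integral, details the paper leaves implicit.
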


\subsection{The first law of Thermodynamics}
The internal energy $U$ is defined as
\begin{equation}
U(\tau, \beta) :=  \left \langle \frac{p_1^2} 2 +V(r_1) \right\rangle_{\tau,0,\beta}.
\end{equation}
Define the microscopic average energy at time $t$ as
\begin{equation} 
\mathcal E_N (t):= \frac{1}{N}\somma{i}{1}{N} \left(\frac{p_i^2(t)}{2}+V(r_i(t))\right).
\end{equation}
The law of large numbers for the initial distribution gives
\begin{equation} \label{eq:en0}
\lim_{N \to \infty} \mathcal E_N(0) = U(\tau_0,\beta)
\end{equation}
in probability. By the hydrodynamic limit and the convergence to equilibrium  we expect that
\begin{equation} \label{eq:ent}
\lim_{t \to \infty} \lim_{ N \to \infty} \mathcal E_N(t)  = U( \tau_1, \beta),
\end{equation}
 but at the present time we do not have the tools to prove it. This would require some knowledge about the finiteness of  expectations of powers of $p$ higher than the second, and the relative entropy (to date the main tool used in order to obtain microscopic estimates)  does not allow to control functions which grow more than the energy itself. Thus, we shall \emph{assume} that \eqref{eq:ent} holds.

Using the microscopic dynamics and omitting to write the dependences of $r_i$ and $p_i$ on time gives
\begin{align}
\mathcal E_N(t)-\mathcal E_N(0) =& \int_0^t \somma{i}{1}{N-1} p_i(V'(r_{i+1})-V'(r_i))ds +\int_0^t p_N (\taubar(s)-V'(r_N))ds+  
\\
&+  \int_0^t\somma{i}{1}{N-1} V'(r_i)(p_i-p_{i-1})ds + \int_0^t V'(r_N)(p_N-p_{N-1}) ds+ \nonumber
\\
& + N \int_0^t\left( \delta_1 \Stilde_N\somma{i}{1}{N} V(r_i) + \delta_2 S_N\somma{i}{1}{N} \frac{p_i^2}{2}\right)ds+ \nonumber
\\
& +\sqrt{2 \beta^{-1}\delta_2} \int_0^t \left(p_1 (d w_0-d w_1) + \somma{i}{2}{N-1}p_i(dw_{i-1}-dw_i)+p_N dw_{N-1} \right)+ \nonumber
\\
& +\sqrt{2 \beta^{-1}\delta_1} \int_0^t \left(-V'(r_1) d\tilde w_1 + \somma{i}{2}{N-1}V'(r_i)(d\tilde w_{i-1}-d\tilde w_i)+V'(r_N) (d\tilde w_{N-1}-d\tilde w_N) \right). \nonumber
\end{align}
\begin{align}
 =& \int_0^t \left[\taubar(s) p_N+ N\delta_1 V'(r_N)(\taubar(s)-V'(r_N))\right] ds+  
\\
&+N \delta_2 \int_0^t \left[ \beta^{-1}(2N-1)- \somma{i}{0}{N-1}(p_{j+1}-p_j)^2 ds \right]ds+ \nonumber
\\
& +N\delta_1 \int_0^t\left[\beta^{-1}V''(r_N)+ \beta^{-1}\somma{i}{1}{N-1} [V''(r_{i+1})+V''(r_i)] -\somma{i}{1}{N-1}[V'(r_{i+1})-V'(r_i)]^2ds\right]+  \nonumber
\\
&+\sqrt{2 \beta^{-1}\delta_2}\int_0^t p_1 dw_0 +\sqrt{2 \beta^{-1}\delta_2}\int_0^t\somma{i}{1}{N-1} (p_{i+1}-p_i)dw_i+  \nonumber
\\
&+\sqrt{2 \beta^{-1}\delta_1}\int_0^t\somma{i}{1}{N-1} (V'(r_{i+1})-V'(r_i))d\wtilde_i-\sqrt{2 \beta^{-1}\delta_1}\int_0^t V'(r_N) d\tilde w_N. \nonumber
\end{align}
We write
\begin{align}
\taubar(s) p_N+ N\delta_1 V'(r_N)(\taubar(s)-V'(r_N)) = & \taubar(s)[p_N+ N \delta_1(\taubar(s)-V'(r_N))] 
\\
& - N \delta_1[\taubar(s)-V'(r_N)]^2 \nonumber
\\
=& \taubar(s)\left[ d\left( \frac{1}{N} \somma{i}{1}{N}r_i \right) +\sqrt{2\beta^{-1} \delta_1} d\tilde w_N\right]- N \delta_1[\taubar(s)-V'(r_N)]^2 \nonumber
\\
=& \taubar(s) d \mathcal L_N(s)+ \taubar(s)\sqrt{2\beta^{-1} \delta_1} d\tilde w_N - N \delta_1[\taubar(s)-V'(r_N)]^2, \nonumber
\end{align}
where
\begin{align}
\mathcal L_N(s) :=  \frac{1}{N} \sum_{i=1}^N r_i(s). 
\end{align}
If we define the microscopic heat
\begin{align}
Q_N(t) := & \delta_2  \beta^{-1}N(2N-1) t- N \delta_2\int_0^t\somma{i}{0}{N-1}(p_{j+1}-p_j)^2  ds+
\\
& +N\delta_1 \beta^{-1} \int_0^t\left[V''(r_N)+ \somma{i}{1}{N-1} (V''(r_{i+1})+V''(r_i))\right] ds \nonumber
\\
& - N \delta_1 \int_0^1 \left[ (\taubar(s)-V'(r_N))^2+\somma{i}{1}{N-1}(V'(r_{i+1})-V'(r_i))^2 \right] ds+ \nonumber
\\
& + \sqrt{2 \beta^{-1}\delta_2} \int_0^t p_1 d w_0 +\sqrt{2 \beta^{-1}\delta_2}\int_0^t\somma{i}{1}{N-1} (p_{i+1}-p_i)dw_i+
\\
&+ \sqrt{2 \beta^{-1}\delta_1}\int_0^t\somma{i}{1}{N-1} (V'(r_{i+1})-V'(r_i))d\wtilde_i + \sqrt{2\beta^{-1} \delta_1} \int_0^t\left[\taubar(s) - V'(r_N)\right]d\tilde w_N \nonumber
\end{align}
and the microscopic work
\begin{align}
W_N(t) :&=  \int_0^t \taubar(s) d \mathcal L_N(s)
\end{align}
we obtain the microscopic version of the first law of thermodynamics:
\begin{equation} \label{eq:1lawN}
\mathcal E_N(t)-\mathcal E_N(0) = Q_N(t)+ W_N(t).
\end{equation}
Thanks to the hydrodynamic limit we can prove the following
\begin{prop} \label{prop:w}
\begin{align}
\lim_{N \to \infty} W_N(t) = \int_0^t \taubar(s) \mathcal L'(s)ds
\end{align}
in probability.
\end{prop}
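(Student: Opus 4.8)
\emph{Proof plan.} The plan is to reduce the statement, via integration by parts, to the convergence $\mathcal L_N(s)\to\mathcal L(s)$ that the hydrodynamic limit already provides. Since $\taubar$ is deterministic, $C^1$, and hence of bounded variation (so it carries no quadratic variation), It\^o's integration-by-parts formula applied to the semimartingale $s\mapsto\mathcal L_N(s)$ yields
\begin{equation}
W_N(t)=\taubar(t)\mathcal L_N(t)-\taubar(0)\mathcal L_N(0)-\int_0^t\taubar'(s)\mathcal L_N(s)\,ds ,
\end{equation}
with no bracket correction; in particular the martingale part of $\mathcal L_N$ does not contribute any new term. On the macroscopic side, $s\mapsto\mathcal L(s)=\int_0^1 r(s,x)\,dx$ is $C^1$ because $r\in C^1(\mathbb R_+;C^0([0,1]))$, so ordinary integration by parts gives $\int_0^t\taubar(s)\mathcal L'(s)\,ds=\taubar(t)\mathcal L(t)-\taubar(0)\mathcal L(0)-\int_0^t\taubar'(s)\mathcal L(s)\,ds$. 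Subtracting, it suffices to prove that each of $\taubar(t)[\mathcal L_N(t)-\mathcal L(t)]$, $\taubar(0)[\mathcal L_N(0)-\mathcal L(0)]$ and $\int_0^t\taubar'(s)[\mathcal L_N(s)-\mathcal L(s)]\,ds$ tends to $0$ in probability.

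For the two boundary terms I would simply invoke \thref{thm:main} with the constant test function $J\equiv 1$: the first component of $\frac1N\sum_i J(i/N)\mathbf u_i-\int_0^1 J\mathbf u\,dx$ is exactly $\mathcal L_N(s)-\mathcal L(s)$, so this difference converges to $0$ in probability for every fixed $s$, in particular at $s=t$ and at $s=0$ (for $s=0$ one may alternatively use the law of large numbers for the initial local Gibbs measure, since $\mathcal L(0)=\ell(\taubar(0))$). As $\taubar$ is bounded, both boundary terms vanish.

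The time-integral term is the only point requiring more than the bare hydrodynamic limit, since \thref{thm:main} only gives convergence at fixed times while here an integral in $s$ appears. The idea is to upgrade the convergence in probability to convergence in $L^1$, uniformly enough in $s$ to apply dominated convergence. The energy estimate $\int\frac1N\sum_{i=1}^N(r_i^2+p_i^2)\,f_s^N\,d\rv d\pv\le C$ (uniform for $s\in[0,t]$), together with Jensen's inequality, gives $\mathbb E_{\mu_s^N}[\mathcal L_N(s)^2]\le C$; hence $\{\mathcal L_N(s)\}_N$ is bounded in $L^2$ uniformly in $N$ and $s$, therefore uniformly integrable, so convergence in probability promotes to $\mathbb E_{\mu_s^N}\big[|\mathcal L_N(s)-\mathcal L(s)|\big]\to 0$ for each $s$. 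This expectation is also bounded uniformly in $N$ and in $s\in[0,t]$ (by $\sqrt C+\sup_{[0,t]}|\mathcal L|$), and $\taubar'$ is bounded and vanishes beyond $T_\star$; dominated convergence in $s$ then yields
\begin{equation}
\mathbb E\left[\left|\int_0^t\taubar'(s)\,[\mathcal L_N(s)-\mathcal L(s)]\,ds\right|\right]\le \|\taubar'\|_\infty\int_0^t\mathbb E_{\mu_s^N}\big[|\mathcal L_N(s)-\mathcal L(s)|\big]\,ds\;\xrightarrow[N\to\infty]{}\;0 ,
\end{equation}
so this term converges to $0$ in $L^1$, hence in probability, completing the proof.

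I expect the only genuine obstacle to be precisely this last step — turning the pointwise-in-time hydrodynamic convergence into an $L^1$-in-time statement — which is why the uniform energy bound must be brought in to secure uniform integrability; the integration by parts (harmless because $\taubar$ has bounded variation) and the handling of the boundary terms are routine.
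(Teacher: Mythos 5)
Your proof follows exactly the paper's route: integrate by parts in time (noting $\taubar$ is deterministic and $C^1$, so the It\^o bracket vanishes), reduce to $\mathcal L_N(s)\to\mathcal L(s)$ in probability via \thref{thm:main} with $J\equiv 1$, and pass to the limit in the three resulting terms. The one place you go beyond the paper is the time-integral term: the paper simply asserts that the fixed-$s$ convergence in probability passes through $\int_0^t\taubar'(s)\,[\mathcal L_N(s)-\mathcal L(s)]\,ds$, whereas you justify this by upgrading to $L^1$ convergence using the uniform $L^2$ energy bound $\int\frac1N\sum_i(r_i^2+p_i^2)\,f_s^N\le C$ (hence uniform integrability of $\mathcal L_N(s)$) and then dominated convergence in $s$. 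That added argument is sound and fills a genuine, if minor, gap left implicit in the paper; it is not a different approach so much as the same one executed more carefully.
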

\begin{proof}
Since $\taubar$ is deterministic, an integration by parts in time gives
\begin{align} \label{eq:wint}
W_N(t) = - \int_0^t \taubar'(s) \mathcal L_N(s) ds + \taubar(t) \mathcal L_N(t)- \taubar(0) \mathcal L_N(0)
\end{align}
Then, we apply Theorem \ref{thm:main} with $J = 1$ in order we obtain
\begin{align}
\mathcal L_N(s) = \frac{1}{N} \sum_{i=1}^N r_i(s) \to \int_0^1 r(s,x)dx =: \mathcal L(s) \qquad \forall s \ge 0
\end{align}
in probability. Therefore, taking the limit $N \to \infty$ in \eqref{eq:wint} and integrating by parts yields
\begin{align}
\lim_{N \to \infty}W_N(t) &=  - \int_0^t \taubar'(s) \mathcal L(s) ds + \taubar(t) \mathcal L(t) - \taubar(0) \mathcal L(0)
\\
& = \int_0^t \taubar(s) \mathcal L'(s) ds. \nonumber
\end{align}
\end{proof}
Applying \eqref{eq:en0},  \eqref{eq:ent} and Proposition \ref{prop:w} to \eqref{eq:1lawN}, we obtain that  $Q_N(t)$ converges, as $N \to \infty$ and $t\to\infty$, to the deterministic
\begin{equation} \label{eq:1stt}
Q:= U(\tau_1, \beta)-U(\tau_0,\beta) - W,
\end{equation}
where
\begin{equation}
W := \int_0^\infty\taubar(s) \mathcal L'(s)ds
\end{equation}
is the total work done by the external tension.  Thus, we have obtained the following
\begin{thm}[First law of thermodynamics]
\begin{align}
U(\tau_1, \beta)-U(\tau_0,\beta)= Q + W,
\end{align}
where 
\begin{align}
Q = \lim_{t \to \infty} \lim_{N \to \infty}Q_N(t)
\end{align}
is the total heat exchanged with the thermostats and 
\begin{align}
W = \lim_{t \to \infty} \lim_{N \to \infty} W_N(t)
\end{align}
is the total work done by the external tension.
\end{thm}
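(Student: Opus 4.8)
The plan is to pass to the limit in the microscopic first law \eqref{eq:1lawN}. First I would solve for the heat, $Q_N(t) = \mathcal{E}_N(t) - \mathcal{E}_N(0) - W_N(t)$, and send $N\to\infty$ with $t$ fixed, handling the three terms separately. The law of large numbers for the initial local Gibbs measure gives $\mathcal{E}_N(0)\to U(\tau_0,\beta)$ in probability — this is \eqref{eq:en0}; Proposition~\ref{prop:w} gives $W_N(t)\to\int_0^t\taubar(s)\mathcal{L}'(s)\,ds$ in probability; and the convergence of the average energy at a fixed macroscopic time, which is implicit in the assumption \eqref{eq:ent}, gives $\mathcal{E}_N(t)\to\mathcal{E}(t)$ in probability for some deterministic $\mathcal{E}(t)$. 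Since all three limits are deterministic they combine additively, so $Q_N(t)\to \mathcal{E}(t) - U(\tau_0,\beta) - \int_0^t\taubar(s)\mathcal{L}'(s)\,ds$ in probability.

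The second step is the limit $t\to\infty$. By \eqref{eq:ent}, $\mathcal{E}(t) = \lim_{N}\mathcal{E}_N(t)\to U(\tau_1,\beta)$. For the work, since $\taubar\equiv\tau_1$ on $[T_\star,\infty)$ one has, for $t > T_\star$, that $\int_0^t\taubar\mathcal{L}'\,ds = \int_0^{T_\star}\taubar\mathcal{L}'\,ds + \tau_1\bigl(\mathcal{L}(t)-\mathcal{L}(T_\star)\bigr)$, which tends to the finite limit $W = \int_0^\infty\taubar(s)\mathcal{L}'(s)\,ds$ because $\mathcal{L}(t)\to\ell(\tau_1)$ as $t\to\infty$, a fact established inside the proof of Proposition~\ref{prop:asymp}. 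Collecting, $Q := \lim_{t\to\infty}\lim_{N\to\infty}Q_N(t) = U(\tau_1,\beta) - U(\tau_0,\beta) - W$ exists, and rearranging yields $U(\tau_1,\beta) - U(\tau_0,\beta) = Q + W$; at the same time the two steps show $W = \lim_{t\to\infty}\lim_{N\to\infty}W_N(t)$, so the quantities appearing in the statement are exactly those produced here.

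The only genuine point of substance, and the one step I do not expect to be able to make unconditional, is the convergence of $\mathcal{E}_N(t)$ assumed in \eqref{eq:ent}. Unlike the linear functionals of $(r_i,p_i)$ controlled by Theorem~\ref{thm:main}, the average energy $\mathcal{E}_N(t)$ is quadratic in the momenta, and the relative-entropy bound $H_N(t)/N\to 0$ does not control expectations of powers of $p$ higher than the second, so neither the hydrodynamic limit nor the entropy method delivers \eqref{eq:ent} directly. Everything else — the term-by-term passage to the limit, the convexity arguments already carried out for Proposition~\ref{prop:asymp}, and the elementary handling of the iterated limit (which must be taken in the order $N$ first, then $t$, since no uniform-in-$t$ relative entropy estimate is available) — is routine once \eqref{eq:ent} is granted.
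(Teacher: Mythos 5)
Your proposal is correct and follows essentially the same route as the paper: pass to the limit term-by-term in the microscopic first law \eqref{eq:1lawN}, using \eqref{eq:en0}, the assumption \eqref{eq:ent}, and Proposition~\ref{prop:w}, then identify $Q$ as the resulting limit of $Q_N(t)$. The only place you go a bit beyond the paper is in explicitly justifying the finiteness of $W=\int_0^\infty \taubar(s)\mathcal L'(s)\,ds$ via $\mathcal L(t)\to\ell(\tau_1)$ and in spelling out the order of the iterated limit, which the paper leaves implicit but which is indeed the intended reading.
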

The Clausius inequality, together with the first law of thermodynamics, allow us to obtain the following
\begin{cor}[Second law of thermodynamics]
Let the thermodynamic entropy $S$ be defined as
\begin{align}
S(\tau,\beta) := \beta[ U(\tau, \beta) - F(\tau,\beta)].
\end{align}
Then,
\begin{align}
S(\tau_1,\beta) - S(\tau_0,\beta) \ge \beta Q.
\end{align}
\end{cor}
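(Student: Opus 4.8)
The plan is to derive the second law purely algebraically from the two results already in hand: the Clausius inequality $F(\tau_1,\beta)-F(\tau_0,\beta)\le W$ and the first law $U(\tau_1,\beta)-U(\tau_0,\beta)=Q+W$. No new probabilistic or PDE input is needed; the corollary is a bookkeeping consequence of these two displays together with the positivity of the temperature $\beta^{-1}>0$ (equivalently $\beta>0$).

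First I would use the first law to eliminate the work term, writing $W=U(\tau_1,\beta)-U(\tau_0,\beta)-Q$. Substituting this into the Clausius inequality gives
\begin{equation}
F(\tau_1,\beta)-F(\tau_0,\beta)\le U(\tau_1,\beta)-U(\tau_0,\beta)-Q,
\end{equation}
and rearranging isolates the heat:
\begin{equation}
Q\le \bigl[U(\tau_1,\beta)-F(\tau_1,\beta)\bigr]-\bigl[U(\tau_0,\beta)-F(\tau_0,\beta)\bigr].
\end{equation}

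Finally I would multiply both sides by $\beta>0$, which preserves the inequality, and invoke the definition $S(\tau,\beta):=\beta[U(\tau,\beta)-F(\tau,\beta)]$ to recognise the right-hand side as $S(\tau_1,\beta)-S(\tau_0,\beta)$. This yields $\beta Q\le S(\tau_1,\beta)-S(\tau_0,\beta)$, which is the claim. There is no genuine obstacle here: the only point to be careful about is that $Q$ and $W$ must be the \emph{same} quantities appearing in both the Clausius inequality and the first law, i.e. the total heat and work over the whole transformation obtained as the iterated limits $\lim_{t\to\infty}\lim_{N\to\infty}$; since both theorems are stated with exactly these limits (and $W=\int_0^\infty\taubar(s)\mathcal L'(s)\,ds$ in both), the substitution is legitimate and the argument closes immediately.
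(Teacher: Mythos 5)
Your proof is correct and is precisely the algebraic deduction the paper intends when it says the corollary follows from the Clausius inequality and the first law; the paper gives no further detail, so your substitution $W = U(\tau_1,\beta)-U(\tau_0,\beta)-Q$, rearrangement, and multiplication by $\beta>0$ is exactly the implicit argument. Your remark that $Q$ and $W$ denote the same iterated-limit quantities in both theorems is the only point that needed checking, and you addressed it correctly.
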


\newpage

\addcontentsline{toc}{chapter}{References}

\noindent
{Stefano Marchesani\\
GSSI, \\
{\footnotesize Viale F. Crispi 7, 67100 L'Aquila, Italy}}\\
{\footnotesize \tt stefano.marchesani@gssi.it}\\

\end{document}